\def\namedlabel#1#2{\begingroup
    #2%
    \def\@currentlabel{#2}%
    \phantomsection\label{#1}\endgroup
}
\numberwithin{equation}{section}
\theoremstyle{plain}
\newtheorem{theorem}{Theorem}[section]
\newtheorem{lemma}[theorem]{Lemma}
\newtheorem{proposition}[theorem]{Proposition}
\newtheorem{corollary}[theorem]{Corollary}
\theoremstyle{remark}
\newtheorem{remark}{Remark}
\theoremstyle{definition}
\DeclareMathOperator*{\argmin}{arg\,min}
\begin{document}

   \author{Jun Maeda}
   \address{University of Warwick, Coventry, United Kingdom}
   \email{J.Maeda@warwick.ac.uk}
   \author{Saul D. Jacka}
   \address{University of Warwick, Coventry, United Kingdom}
   \email{S.D.Jacka@warwick.ac.uk}

  \title[A Market Driver Volatility Model via PIA]{\textit{A Market Driver Volatility Model via Policy Improvement Algorithm}}

   \begin{abstract}
In the over-the-counter market in derivatives, we sometimes see large numbers of traders taking the same position and risk. When there is this kind of concentration in the market, the position impacts the pricings of all other derivatives and changes the behaviour of the underlying volatility in a nonlinear way.\\
We model this effect using Heston's stochastic volatility model modified to take into account the impact. The impact can be incorporated into the model using a special product called a market driver, potentially with a large face value, affecting the underlying volatility itself. We derive a revised version of Heston's partial differential equation which is to be satisfied by arbitrary derivatives products in the market. This enables us to obtain valuations that reflect the actual market and helps traders identify the risks and hold appropriate assets to correctly hedge against the impact of the market driver.
\end{abstract}

   \keywords{stochastic volatility model, Heston model, semilinear parabolic partial differential equations, policy improvement algorithm, Hamilton-Jacobi-Bellman equation}

 %  \thanks{We would like to thank Markit \textregistered for providing the \copyright Totem data for the Nikkei Stock Average index options for this research.}

   \date{\today}

%\tableofcontents

\maketitle

\section{Introduction}
\label{section: Introduction}

Japan has one of the largest equity derivatives markets in the world. According to the Bank for International Settlements, Japan had \$378 billion in face value of equity-linked contracts out of the worldwide total of \$5,445 billion as of September 13, 2015  \cite{bis15}. A common underlying for equity-linked derivatives products in the country is the price-weighted Nikkei Stock Average Index (Nikkei 225) published by Nikkei Inc. Since the country is the world's third largest economy by GDP, people generally assume that the market is liquid enough to trade freely any desired position. However, in the first author's experience, this is not quite true. Long-dated volatility (especially between 2 and 5 years) is generally priced low due to the fact that the most of the traders in the market already own vega (sensitivity to volatility) from selling  (usually in significant sizes) a structured product called an 'auto-callable' to their clients. Therefore it is generally difficult to sell vega in the market. We will discuss this exotic case in more detail in a future paper, and for now, we will focus on explaining our model in a simpler context in this paper. The important fact to note is that there is a position that affects the pricings and risks of all the existing and potential derivatives products in the market.

In order to understand the background of our model, we first formulate a toy example. 

Assume that there are only 2 traders, A and B, in the over-the-counter (OTC) market. If A wants to buy volatility, A needs to buy it from B and vice versa. If A buys \$10 million of vega from B, the vega that B holds is decreased by the same amount. Generally, traders don't want to own so much risk on one side, so they might want to hedge the risk a little. Since the only market participants are A and B, they need to reverse what they previously traded in order to hedge themselves. This does not make much sense in this case, as there are only 2 market participants, but even if we assumed more participants in the market, this is still essentially what is happening: overall, the market vega is maintained and does not change whatever A and B do. Whatever A gains, B loses and vice versa.

Now introduce a new market participant C. Let us assume that C is not a participant in the OTC market but only buys vega from A and B as their client to hedge against market risk, and doesn't otherwise hedge the position (we could think of C as an insurance company, for example). If C buys \$10 million of vega from A, then A is now short the risk, so may want to buy some back in the market to hedge himself. A needs to buy it from B, of course, as C doesn't sell any vega. The important point is that the OTC market whose only participants are A and B is now short \$10 million of vega overall. The market now would like to buy some vega back. This generally drives the volatility of the underlying security or index higher. 

We elaborate this point in more detail. The demand and supply of vega could be, in general, directly converted to the supply and demand of volatility. It is easier to think of this in the Black-Scholes framework. If there is more demand for vega than supply, more people want to buy vega. The way they accomplish this is to buy plain vanilla calls and puts, which are positive vega products. If more people buy these products, the prices of the products move higher. Given other parameters are fixed, this price increase could only be explained by the increase in the underlying volatility. This is why the actual market participants refer to 'buying (selling) volatility'  when they are actually buying (selling) vega. These phrases will be used with the same meanings hereafter.

\begin{remark}
The corresponding volatility level is implied volatility as opposed to realized (or historical) volatility.
\end{remark}

Up to this point, volatility movement is just a matter of demand and supply. Now suppose that the derivative product that C bought has big second order risks, like vanna (the derivative of vega with respect to stock price) and volga (the derivative of vega with respect to volatility). For example, if the product is long vanna, vega increases when the underlying stock moves higher. In this case, A gets shorter vega just from the market movement and  he needs to buy it in the market to rehedge himself. However, if B has the same position, B gets shorter vega as well, so neither of them are interested in selling any more vega. This will make the volatility even higher. Note that in this situation, what is moving the volatility is just the change in the risk of the product that was already traded, not a new trade. We call the special product (of which the risks affect the dynamics of supply and demand of the volatility) the \textit{market driver}.

In order to model the example above, we posit a simple and easy-to-use model which is just an extension of the Heston model, one of the most popular stochastic volatility models. The core of our model is a semilinear parabolic partial differential equation (PDE) that we retrieve to price the market driver. Once we obtain the valuation of the market driver, we use a \textit{linear} parabolic PDE, which is very similar to those of Black-Scholes and Heston, to price other derivatives products.

As mentioned earlier, we are more interested in the case where the market driver is of a specific exotic type because we think its risk feedback effect is more prominent in practice. We will handle this problem in a future paper and concentrate now on the case when it is of plain vanilla type.

The problem statement so far may remind some readers of the 'feedback effect' of options which is now a somewhat mature field. The Black-Scholes model with a feedback effect models the prices of derivative products affected by delta hedging executed by program traders \cite{fs97, ps98, sp98}. It was a field which attracted a lot of interests in the 1990s. However, as far as we know, not much work has been done since then. Although the research in this paper was done separately from the studies done in the field, our ideas are very similar in the sense that some trade affects other option pricing.  We are (in a way) incorporating the feedback effect in a stochastic volatility framework. The key difference, however, is that we are not applying the feedback effect of the underlying asset (stock), but instead, \textit{that of the underlying volatility}. In the earlier models, the effect impacts the volatility passively via program traders trading the underlying asset. On the other hand, our model incorporates the effect directly in the dynamics of the volatility.

It may not look natural to incorporate a feedback effect in the volatility as it is not a tradable asset. However, from the first author's experience, supply and demand effects of the volatility do exist in the actual market and we think our model reflects, at least qualitatively, the actual market dynamics with the market driver. We believe that our model is more in line with market practitioners' perspectives than the classical feedback model.

One of the difficulties in the earlier feedback models is that they model the realized (historical) volatility rather than the implied volatility. Hedging delta of derivative products by dynamically trading the underlying asset does affect the implied volatility, but only that of short maturity. Behaviour of the current stock price has little impact over the long-dated implied volatility. Depending on the sign of the vanna of the market driver,  it is possible, for example, that even when the realized volatility increases with a large drop in the stock price, the long-dated implied volatility go lower. This cannot be modelled in the classic feedback model.

One of the benefits of our model is that the nonlinear PDEs that we derive can be approximated by a series of linear ones. The PDEs derived in the classic feedback model are generally of quasilinear type, where the nonlinearity occurs in the highest order of the equations. On the other hand, although we need a pair of PDEs, one for the market driver and the other for a general derivative product, our PDEs are at most of semilinear parabolic type, where the nonlinearity occurs in lower order of the equations. This enables us to apply a linear approximation algorithm called the Policy Improvement Algorithm (PIA) in which the approximated solution converges quickly to the actual solution of the semilinear PDE.

The reason why we introduce the PIA is that it enables us to reuse the setup for the Heston model. The Heston model has already been implemented in practice and is widely used. It is convenient to use the existing setup, whenever possible, to calculate the solutions of the new model. We also note that in the course of our research, we encountered some cases where we had a convergence in numerical solution using the PIA, but not using the finite difference method (FDM): the PIA seems to have better convergence in numerical solution than the FDM.

The rest of the paper is organized as follows:  Section~\ref{section: NewModel} explains the new model in detail. We will establish the existence and uniqueness of the solution to our PDEs in Section~\ref{section: Mathematical_Setup}. In Section~\ref{section: Control}, we transform the nonlinear PDE to an HJB equation. The PIA is then described in Section~\ref{section: PIA}. In Section~\ref{section: Numerical_Simulation}, we give a numerical example to see how valuations and risks, which are very important for day-to-day hedging for traders, change in our model from those in Heston's model. We also see in this section how PIA-approximated solutions converge to that of the nonlinear PDE. We give our conclusions in Section~\ref{section: Conclusion}.

\section{The Market Driver Model}
\label{section: NewModel}
We start by briefly reviewing Heston's stochastic volatility model \cite{hes93}.
The stochastic differential equations for the stock price and the variance are:

\begin{equation}\label{eq: Heston}
	\left\{
	\begin{array}{c}
	dS = \mu S dt + \sqrt{v} S dW^1\\
	dv = \kappa(\bar{v} - v)dt + \eta \sqrt{v} dW^2 \\
	\langle dW^1, dW^2\rangle = \rho dt \text{.}
	
\end{array}
\right.
\end{equation}

Here, $S$ denotes the underlying stock price and $v$ the variance of the underlying. $W^1$ and $W^2$ are Wiener processes with correlation $\rho$, $\mu$ is the drift of the stock, $\kappa>0$ is a constant which expresses the intensity of the mean reversion of the variance, $\bar{v}$ is the mean variance, and $\eta$ is the volatility of the variance. 

Since $v$ only takes positive values, we usually require the model to satisfy Feller's condition for avoiding the origin \cite{kt81}:

\begin{equation}\label{eq: FellerCondition}
2\kappa\bar{v}>\eta^2\text{.}
\end{equation}

With this setup, the value $V$ of a derivatives product with its satisfies Heston's PDE:

\begin{align}\label{eq: HestonPDE}
\begin{split}
\frac{\partial V}{\partial t} + r S\frac{\partial V}{\partial S} +\kappa\Big(\bar{v} - \omega v \Big)\frac{\partial V}{\partial v}+ \frac{1}{2}vS^2\frac{\partial^2 V}{\partial S^2}+ \frac{1}{2}v\eta^2\frac{\partial^2 V}{\partial v^2}+ vS\eta\rho\frac{\partial^2 V}{\partial S \partial v} - rV = 0
\end{split}
\end{align}

\noindent{with appropriate initial (or terminal, if we are calculating backwards in time) and boundary conditions. Equation \eqref{eq: HestonPDE} is a second order linear parabolic PDE. Here, $\omega$ is some constant for volatility risk premium and $r$ is the interest rate.}

Let us now assume that there is some distinguished product (called the market driver) with  value denoted by $F$. 

Using this $F$, our revised model is written as 

\begin{equation}\label{eq: Maeda_Model}
	\left\{
	\begin{array}{c}
	dS = \mu Sdt + \sqrt{v} S dW^1\\
	dv = \kappa(\bar{v} -v + Q\frac{\partial F}{\partial v})dt + \eta \sqrt{v} dW^2\\ 
	d\langle W^1, W^2 \rangle_t = \rho dt
	
\end{array}
\right.
\end{equation}

\noindent{ with some coefficient $Q$.}

Note that the only change made to the Heston SDE \eqref{eq: Heston} is the term $\kappa  Q\frac{\partial F}{\partial v}$ in the second equation. A simple justification for this is that the vega (in this paper, we use the term 'vega' for the derivative of the valuation with respect to variance, whereas it usually means the derivative of the valuation with respect to volatility) of the market driver impacts supply and demand of the variance and causes the shift in its mean. We are only adding this adjustment to the variance SDE. If we want to, we could, of course, similarly add 'delta' (derivatives of valuation with respect to the underlying stock price) adjustment in the SDE for the stock price $S$ in \eqref{eq: Maeda_Model}. However, we do not do this since i) deltas of derivatives products are generally low, so in order to have a large impact on the stock price, the face value traded on the position needs to be massive, which is not realistic and ii) the stock market is more liquid than the OTC derivatives market, in the sense that there are more people with different incentives in trading and many more people have access to the market (for example, personal investors can easily trade stocks, whereas they might need to satisfy additional requirements in order to trade derivatives. It is even harder for them to be able to trade in the OTC market due to size requirements, credit issues, and other restrictions). 

A sufficient condition for the variance not to go negative is derived by comparing the 2 processes $v$ and $v'$ starting at the same value:

\begin{equation}\label{eq: Comparison}
	\left\{
	\begin{array}{c}
	dv = \kappa(\bar{v} -v + Q\frac{\partial F}{\partial v})dt + \eta \sqrt{v} dW^2\\ 
	dv' = \kappa\{\bar{v} -v' + \min(Q\frac{\partial F}{\partial v})\}dt + \eta \sqrt{v'} dW^2\text{.}
\end{array}
\right.
\end{equation}

Since we will be working in a bounded domain, Proposition 5.2.18 in \cite{ks98} shows that $v' \le v$ almost surely. Applying Feller's condition \eqref{eq: FellerCondition} on $v'$, if

\begin{equation}\label{eq: NewFellerCondition}
2\kappa\Big\{\bar{v}+\min\Big(Q\frac{\partial F}{\partial v}\Big)\Big\}>\eta^2\text{,}
\end{equation}

\noindent{then $v' > 0$ almost surely, hence $v > 0$ almost surely. We call condition \eqref{eq: NewFellerCondition} the \textit{positive variance condition}.}

If we follow the usual argument, we obtain the following PDE for the value $V$ of a derivative:

\begin{align}\label{eq: Maeda_Model_PDE}
\begin{split}
\frac{\partial V}{\partial t} + r S\frac{\partial V}{\partial S} +\kappa\Big(\bar{v} - \omega v + Q\frac{\partial F}{\partial v}\Big)\frac{\partial V}{\partial v} + \frac{1}{2}vS^2\frac{\partial^2 V}{\partial S^2}
+ \frac{1}{2}v\eta^2\frac{\partial^2 V}{\partial v^2}+ vS\eta\rho\frac{\partial^2 V}{\partial S \partial v} - rV = 0\text{.}
\end{split}
\end{align}

Since $F$ is also the value of a specific derivative, we can substitute $V =F$ in \eqref{eq: Maeda_Model_PDE} and obtain a nonlinear PDE for $F$:

\begin{align}\label{eq: Maeda_Model_F}
\begin{split}
\frac{\partial F}{\partial t} + r S\frac{\partial F}{\partial S} + \kappa\Big(\bar{v} - \omega v + Q\frac{\partial F}{\partial v}\Big)\frac{\partial F}{\partial v} 
+ \frac{1}{2}vS^2\frac{\partial^2 F}{\partial S^2}
+ \frac{1}{2}v\eta^2\frac{\partial^2 F}{\partial v^2}+ vS\eta\rho\frac{\partial^2 F}{\partial S \partial v} - rF = 0\text{.}
\end{split}
\end{align}

Note that given $F$, the differential equation \eqref{eq: Maeda_Model_PDE} is a second order parabolic PDE that is linear in $V$ as in the Heston model. On the other hand, the differential equation \eqref{eq: Maeda_Model_F}, is semilinear. 

\section{Partial Differential Equations}
\label{section: Mathematical_Setup}

We recall some theorems from the theory of PDEs. For more detail, refer to \cite{lsu68}.

We take a bounded, open, and connected domain $\mathcal{E}$ in $\mathbb{R}_{+}^2$ which is bounded away from the axes. We further assume that $\partial \mathcal{E}$ is $C^{2+\alpha'}$ for some $\alpha'>0$. Let $Q_T = \mathcal{E}\times(0,T)$, $\mathcal{D} = \partial \mathcal{E}$, $\mathcal{D}_T = \{(x,y,t)| (x, y) \in \mathcal{D}, t\in [0,T]\}$, and $\Gamma_T = \mathcal{D}_T \cup \{(x,y,t)| (x, y)\in\mathcal{E} , t=0\}$. We impose $\psi$ as our initial and boundary conditions and assume it satisfies the compatibility condition, i.e.\ $\psi (x, y,t) \in C(\overline{Q_T})$. 

We define the differential operator $L$ by

\begin{align}
\begin{split}
-Lu: &=  r xu_x + \kappa(v_0 - \alpha y)u_y + \frac{1}{2}x^2yu_{xx}+ \frac{1}{2}\eta^2yu_{yy}+ \eta\rho{x}{y}u_{xy}\\
&= a_{ij}u_{ij} + b_iu_i
\end{split}
\end{align}

\noindent{under the Einstein summation convention.}

We reparameterize time-to-go $t$ backwards by replacing $t \rightarrow T - t$ and rewrite \eqref{eq: Maeda_Model_F} in general form:

\begin{equation}\label{eq: abstractPDE}
u_t + Lu +ru-\kappa{Q}{u_y}^2 =0\text{.}
\end{equation}

The PDE \eqref{eq: abstractPDE} is uniformly parabolic as it satisfies

\begin{equation}\label{eq: uniformelliptic}
\nu_1|\xi|^2 \le a_{ij}\xi_i\xi_j \le \nu_2 |\xi|^2 \quad \forall (x, y) \in {\overline{\mathcal{E}}} \text{,} \quad  \forall \xi \in \mathbb{R}^2
\end{equation}
\noindent{for some $\nu_1, \nu_2 > 0$.}

Theorem 6.2 of Chapter V of \cite{lsu68} shows the existence and uniqueness of the solution to \eqref{eq: abstractPDE} with continuous initial and boundary conditions. By the theorem, the solution belongs to the space $H^{\beta, \beta/2}(\overline{Q_T})$ for some $0 < \beta <1$, it also has bounded first spatial derivatives in $\overline{Q_T}$, and its second order spatial derivatives and first order time derivative belong to $H^{\gamma, \gamma/2}(\overline{Q_T})$ for some nonnegative and nonintegral number $\gamma$.

By substituting this solution in the coefficients of the PDE \eqref{eq: Maeda_Model_PDE}, Corollary 1 in Section 3.5 on page 74 of \cite{fri64} affirms the existence and uniqueness of the solution to the linear PDE for suitable initial and boundary conditions.

\begin{remark}\label{remark1}
We require the positive variance condition \eqref{eq: NewFellerCondition} to be satisfied in order to ensure that $v$ is nonnegative. Theorem 6.2 of Chapter V from \cite{lsu68} affirms that $F_y$ is bounded, but as far as the statement of the theorem goes, we don't have an explicit expression of it. For that reason, it is not easy to show that \eqref{eq: NewFellerCondition} is satisfied in general. In the case where the market driver with value $F$ is of plain vanilla type with $Q>0$, enforcing Feller's condition \eqref{eq: FellerCondition} is sufficient for the positive variance condition \eqref{eq: NewFellerCondition} to be satisfied since $\partial F/\partial y \ge 0$, and therefore $Q (\partial F/\partial y) \ge 0$.
\end{remark}

\section{Control Problem}
\label{section: Control}

From now on, we focus on solving \eqref{eq: abstractPDE}. We can apply various numerical methods, for example, the FDM, to calculate the solution numerically. If we were to do this, we would need additional resources to implement it in actual trading and in some cases, it may not be easy to do so. One of the difficulties may originate from the fact that even though it's semilinear, it's still a nonlinear PDE that we are dealing with. Applying the model to actual trading becomes more straightforward with a help of the Policy Improvement Algorithm (PIA). 

It is easy to see that \eqref{eq: abstractPDE} can be rewritten as

\begin{equation}
\inf_{\pi \in \mathbb{R}} \bigg(u_t + Lu + ru - \pi{u_y} + \frac{\pi^2}{4\kappa{Q}}\bigg) = 0\text{.}
\end{equation}

Note that this is the HJB equation to minimize

\begin{equation}\label{eq: costfunction}
V^\pi (x, y, t)= \mathbb{E}\bigg[\int^{\tau \wedge t}_0 e^{-rs}f^\pi (Z^{z, \pi}_s, t-s)ds + e^{-r(\tau\wedge t)}g(Z^{z, \pi}_{\tau\wedge t}, t\wedge\tau)\bigg]
\end{equation}

under the controlled process $Z^{z,\pi}_t := (X,Y^\pi)^T$ with SDEs

\begin{equation}\label{eq: Maeda_Model_ControlSDE}
	\left\{
	\begin{array}{c}
	dX = \mu X dt + \sqrt{Y} X dW^1\\
	dY^\pi = \kappa (\bar{v} - Y^\pi + \pi/\kappa)dt + \eta \rho\sqrt{Y^\pi}dW^1 + \eta \sqrt{Y^\pi}\sqrt{1-\rho^2} dW^2\\ 
	d\langle W^1, W^2 \rangle_t = 0
	
\end{array}
\right.
\end{equation}

with  $Z^{z,\pi}_0 =z=(x,y)^T$. Here, $f^\pi = \pi^2/4\kappa{Q}$, $g=\psi$ is the initial and boundary conditions introduced in Section~\ref{section: Mathematical_Setup}, and $\tau$ is the first hitting time of the boundary of the domain.

Our problem is now converted into the HJB equation for the following controlled initial/boundary problem:

\begin{equation}\label{eq: NonlinearIBVControl}
	\left\{
	\begin{array}{l}
	\displaystyle \inf_{\pi \in \mathbb{R}} \bigg(u_t + Lu + ru - \pi{u_y} + \frac{\pi^2}{4\kappa{Q}}\bigg) = 0 \quad (x, y, t) \in \mathcal{E} \times (0,T)\\
	u(x, y, t) = \displaystyle \inf_{\pi}V^\pi(x,y, t)\text{.}
\end{array}
\right.
\end{equation}

From the positive variance condition \eqref{eq: NewFellerCondition}, 

\begin{equation}\label{eq: FellerTypeCondition}
\pi > \frac{\eta^2}{2} - \kappa\bar{v}
\end{equation}

is sufficient for $Y$ not to go below zero.

\section{Policy Improvement Algorithm}
\label{section: PIA}

We now give a detailed formulation of the PIA and the proof of convergence. For more detail, refer to \cite{jms1} and \cite{jms2}.

Let $(\Omega, \mathcal{F},(\mathcal{F}_t)_{t\ge0},\mathbb{P})$ be a filtered probability space satisfying the usual conditions that supports a 2-dimensional $(\mathcal{F}_t)_{t\ge0}$ - Wiener process $W= (W_t)_{t\ge0}$. 

For any process $\mathcal{Y}=(\mathcal{Y}_t)_{t\ge0}$, define

\begin{equation}\label{eq: exittime}
\tau_\mathcal{E} (\mathcal{Y}):= \inf\{t \ge 0; \mathcal{Y}_t \in \partial \mathcal{E}\}\text{.}
\end{equation}

Let

\begin{align}\label{eq: controlspace}
\begin{split}
\mathcal{A}(z, T) &: = \{\Pi = (\Pi_t)_{t<T}; \Pi \text{ is adapted to } (\mathcal{F}_t)_{t<T}, \Pi_t(\omega) \in \mathbb{R} \text{ for every }t<T \text{ and }\omega \in \Omega \text{,}\\
& \text{and there exists a }\text{process }Z^{z, \Pi} \text{ that satisfies }\eqref{eq: SDE_0} \text{ and is unique in law}\}\text{,}
\end{split}
\end{align}

where

\begin{equation}\label{eq: SDE_0}
Z^{z, \Pi}_t = z + \int^t_0 \sigma(Z^{z, \Pi}_s, s, \Pi_s)dW_s + \int^t_0 \mu(Z^{z,\Pi}_s, s, \Pi_s)ds \qquad t\le T\wedge \tau_\mathcal{E}(Z^{z, \Pi})\text{.}
\end{equation}

A measurable function $\pi: \Omega \times (0,T) \rightarrow \mathbb{R}$ is a \textit{Markov policy} if for every $z \in \mathcal{E}$ and $T>0$ there exists a process $Z^{z, \pi}_t$  that is unique in law and satisfies the following:

\begin{align}\label{eq: SDE}
\begin{split}
Z^{z, \pi}_t = z + \int^t_0 &\sigma(Z^{z, \pi}_s, s, \pi(Z^{z, \pi}_s, s))dW_s + \int^t_0 \mu(Z^{z, \pi}_s, s, \pi(Z^{z, \pi}_s, s))ds \\
= z + \int^t_0& \sigma_\pi(Z^{z,\Pi}_s, s)dW_s + \int^t_0 \mu_\pi(Z^{z, \Pi}_s, s)ds \qquad t\le T\wedge \tau_\mathcal{E}(Z^{z, \pi})\text{.}
\end{split}
\end{align}

For any domain $Q_T = \mathcal{E}\times (0,T)$  and bounded measurable function $g$ defined on $\Gamma_T$, define $V^{g, \mathcal{E}, \pi}$ by

\begin{align}
\begin{split}
V^{g,\mathcal{E},\pi}(z, t) &= \mathbb{E}_{z}\bigg(\int^{t\wedge\tau}_0 e^{-rs}f^\pi (Z^{z,\pi}_s, t-s)ds+ e^{-r(t\wedge\tau)}g(Z^{z,\pi}_{t\wedge\tau}, t\wedge\tau)\bigg)\text{,}
\end{split}
\end{align}

where $f^\pi$ is the running cost and $\tau$ is the first exit time from $Q_T$.

Now define

\begin{equation}\label{eq: value_Function}
V^{g,\mathcal{E}} := \inf_{\pi\in\mathcal{A}}V^{ g, \mathcal{E},\pi}\text{.}
\end{equation}

Finally, we define the differential operator $L^\pi$:

\begin{equation}\label{eq: def_L}
L^\pi u := -u_t +\frac{1}{2}Tr\{\sigma^T_\pi (Hu) \sigma_\pi\} + \mu_\pi^T\nabla u \quad \text{ for } u\in {C}^{2,1} \text{,}
\end{equation}

where $H u$ is the Hessian of the function $u$.

\begin{proposition}\label{prop: 1}
For any Markov policy $\pi$ that is Lipschitz on compacts in $\mathbb{R}_{+}^2$, the following holds:
$V^{g, \mathcal{E}, \pi} \in {C}^{2, 1}(Q_T)$ and it satisfies
\begin{equation}\label{eq: prop1}
L^\pi V^{g,\mathcal{E}, \pi} - rV^{g,\mathcal{E}, \pi} +f^\pi= 0\text{.}
\end{equation}
\end{proposition}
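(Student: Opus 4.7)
The plan is the standard verification (Feynman--Kac) argument. Fix a Markov policy $\pi$ that is Lipschitz on compacts. Under $\pi$, the controlled SDE~\eqref{eq: SDE} collapses to an ordinary SDE whose coefficients $\mu_\pi(z,s)$ and $\sigma_\pi(z,s)$ are compositions of the Lipschitz-on-compacts function $\pi$ with the base coefficients read off from \eqref{eq: Maeda_Model_ControlSDE}. Because $\mathcal{E}$ is bounded and bounded away from the axes, $\sqrt{y}$ is smooth on $\overline{\mathcal{E}}$, so $\mu_\pi$ and $\sigma_\pi$ are bounded and Lipschitz (hence H\"older continuous with any exponent in $(0,1)$) on $\overline{Q_T}$; the running cost $f^\pi = \pi^2/(4\kappa Q)$ inherits the same regularity.

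Next, consider the auxiliary linear initial/boundary value problem: find $u \in C^{2,1}(Q_T)\cap C(\overline{Q_T})$ solving
\[
L^\pi u - r u + f^\pi = 0 \quad \text{in } Q_T, \qquad u = g \quad \text{on } \Gamma_T .
\]
The operator $L^\pi$ is uniformly parabolic on $\overline{\mathcal{E}}$ by exactly the estimate~\eqref{eq: uniformelliptic} (the policy modifies only the drift), its coefficients are H\"older continuous on $\overline{Q_T}$, the boundary $\partial\mathcal{E}$ is $C^{2+\alpha'}$, and $g$ is continuous on $\Gamma_T$ and satisfies the compatibility condition of Section~\ref{section: Mathematical_Setup}. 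Classical linear parabolic theory (as invoked earlier through \cite{lsu68} and \cite{fri64}) then yields existence and uniqueness of such a classical solution $u$.

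To identify $u$ with $V^{g,\mathcal{E},\pi}$, apply It\^o's formula to the process
\[
M_s := e^{-rs} u\bigl(Z^{z,\pi}_s, t-s\bigr) + \int_0^s e^{-r\sigma} f^\pi\bigl(Z^{z,\pi}_\sigma, t-\sigma\bigr) \, d\sigma , \qquad s \in [0, t\wedge \tau_\mathcal{E}(Z^{z,\pi})] .
\]
The backward time parametrisation is chosen precisely so that the finite-variation part of $M_s$ reduces to $\int_0^s e^{-r\sigma}\bigl(-L^\pi u + r u - f^\pi\bigr)\bigl(Z^{z,\pi}_\sigma, t-\sigma\bigr)\, d\sigma$, which vanishes by the PDE. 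Hence $M$ is a local martingale; boundedness of $u$ and its spatial gradient on $\overline{Q_T}$ together with boundedness of $\sigma_\pi$ promote it to a true martingale on $[0,t\wedge\tau_\mathcal{E}]$. Evaluating $\E_z[M_{t\wedge\tau_\mathcal{E}}] = M_0 = u(z,t)$ and using the boundary condition $u = g$ on $\Gamma_T$ gives $u(z,t) = V^{g,\mathcal{E},\pi}(z,t)$, whence $V^{g,\mathcal{E},\pi} \in C^{2,1}(Q_T)$ and satisfies~\eqref{eq: prop1}.

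The main obstacle is the first regularity step: one must check that once $\pi$ is substituted into the coefficients, the resulting linear PDE falls within the scope of the classical existence theorems used in Section~\ref{section: Mathematical_Setup}. This boils down to (i) H\"older continuity of $\mu_\pi$ and $\sigma_\pi$, which follows from the Lipschitz-on-compacts assumption on $\pi$ combined with smoothness of the base coefficients on $\overline{\mathcal{E}}$ (itself a consequence of $\mathcal{E}$ being bounded away from the axes), and (ii) preservation of uniform parabolicity, which is immediate since $\pi$ enters only through the drift. Once these are in hand, the It\^o/martingale identification is routine.
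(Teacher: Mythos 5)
Your argument is a valid Feynman--Kac verification, but it follows a genuinely different route from the paper. The paper does \emph{not} solve the global Dirichlet problem with boundary data $g$ on $\Gamma_T$. Instead it localises: it takes an arbitrary open ball $U$ with $\overline{U}\subset Q_T$, solves the linear problem $L^\pi v - rv + f^\pi = 0$ on $U$ with boundary data $V^{g,\mathcal{E},\pi}$ \emph{itself} (made sensible by Lemma~\ref{lemma: 3}, which establishes continuity of $V^{g,\mathcal{E},\pi}$), applies It\^o on $[0,\tau_n]$ for the exit times from slightly smaller balls $U_n$, and passes to the limit by dominated convergence. The identification $v = V^{g,\mathcal{E},\pi}$ on $U$ is then closed by the strong Markov property (Lemma~\ref{lemma: 4}) rather than by the boundary condition $u=g$. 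The advantages of this local strategy are that it only ever requires interior Schauder regularity (it never needs to know how regular the classical solution is up to $\Gamma_T$ when $g$ is merely continuous), and it produces $C^{2,1}$ regularity ball-by-ball, which is exactly what the proposition asserts.

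Your global route can be made to work, but two points need repair. First, you omit the exhaustion by interior stopping times $\tau_n$: the solution $u$ is $C^{2,1}$ only in the open set $Q_T$, so It\^o's formula cannot be applied directly up to $t\wedge\tau_\mathcal{E}$; you must apply it on $[0,\tau_n]$ and let $n\to\infty$. Second, your claim that the spatial gradient of $u$ is bounded on $\overline{Q_T}$ is not justified when $g$ is only assumed continuous and satisfying the compatibility condition --- the gradient can blow up at $\Gamma_T$. This claim is also unnecessary: once you use the $\tau_n$ localisation, it is enough that $u$ and $f^\pi$ are bounded on $\overline{Q_T}$, which makes each localised process bounded and hence a uniformly integrable martingale, after which dominated convergence lets you send $n\to\infty$. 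Replacing the gradient-boundedness claim by this argument brings your proof in line with the level of rigour in the paper.
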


\begin{proof}
It suffices to prove that $ V^{g,\mathcal{E}, \pi}$ satisfies \eqref{eq: prop1} in every open ball $U$ with $\overline{U} \subset Q_T$. 
Let $U$ be such an open ball with centre $\zeta$ and radius $\ell$. Let $z \in U$ and define $\tau$ as the first time the process $Z^{z,\pi}$ hits the boundary of $U$. For every $n \in \mathbb{N}$, define $U_n$ as the closed ball with centre $\zeta$ and radius $\ell - \frac{1}{n}$, and let $\tau_n$ be the first time the process $Z^{z,\pi}$ hits the boundary of $U_n$. 

Let $v \in {C}^{2,1}(U) \cap {C}(\overline{U})$ be the unique solution of the initial boundary value problem

\begin{equation}
	\left\{
	\begin{array}{c}
	L^\pi v - rv+f^\pi= 0\\
	v|_{\mathcal{D}_T} = V^{g, \mathcal{E}, \pi}|_{\mathcal{D}_T}\text{.}
\end{array}
\right.
\end{equation}

The existence and uniqueness is guaranteed by Corollary 1 on page 71 in \cite{fri64} and Lemma~\ref{lemma: 3} in the Appendix. The partial derivatives of $v$ are H\"{o}lder continuous by the same corollary. Let $n_0$ be large enough such that $z \in U_n$, and for every $n \ge n_0$, define the process $(J^{n})_{n\ge n_0}$ by

\begin{equation}
J^{n}_t := \int^{t\wedge\tau_{n}}_0 e^{-rs}f^\pi(Z^{z,\pi}_s, t-s)ds + e^{-r(t\wedge\tau_{n})}v(Z^{z,\pi}_{t\wedge\tau_{n}}, t-t\wedge\tau_{n})
\end{equation}

and 

\begin{equation}
J_t := \int^{t\wedge\tau}_0 e^{-rs}f^\pi(Z^{z,\pi}_s, t-s)ds + e^{-r(t\wedge\tau)}v(Z^{z,\pi}_{t\wedge\tau}, t-t\wedge\tau)\text{.}
\end{equation}

Ito's formula on $[0,\tau_n]$ and the differential equation for $v$ yield

\begin{align}
\begin{split}
J^{n}_t &= v(z, t) +\int^{t\wedge\tau_{n}}_0 e^{-rs}(f^\pi -rv +L^\pi{v})(Z^{z, \pi}_s, t-s) ds+\int^{t\wedge\tau_{n}}_0 e^{-rs}(\nabla{v})^T \sigma_\pi dW_s\\
& = v(z, t) + \int^{t\wedge\tau_{n}}_0 e^{-rs}(\nabla{v})^T \sigma_\pi dW_s\text{.}
\end{split}
\end{align}

Hence $J^{n}$ is a local martingale, and since it is clearly a bounded process, it is a uniformly integrable martingale. Thus the Dominated Convergence Theorem yields

\begin{equation}
v(z, t) =\lim_{n \rightarrow \infty}E(J^{n}_0) =\lim_{n \rightarrow \infty} E(J^{n}_{t}) = E(J_t)\text{.}
\end{equation}

From the initial and boundary conditions for $v$, we obtain

\begin{align}\label{eq: supplement_eq1}
\begin{split}
J_t &= \int^{t\wedge\tau}_0 e^{-rs}f^\pi(Z^{z,\pi}_s, t-s)ds + e^{-r(t\wedge\tau)}v(Z^{z,\pi}_{t\wedge\tau}, t-t\wedge\tau)\\
&= \int^{t\wedge\tau}_0 e^{-rs}f^\pi(Z^{z,\pi}_s,  t-s)ds + e^{-r(t\wedge\tau)}V^{ g, \mathcal{E}, \pi}(Z^{z,\pi}_{t\wedge\tau}, t-t\wedge\tau)\\
&=E\bigg(\int^{t\wedge\tau}_0 e^{-rs}f^\pi(Z^{z,\pi}_s, t-s) ds +e^{-r(t\wedge\tau)}g(Z^{z, \pi}_{t\wedge\tau}, t\wedge\tau)\bigg| \mathcal{F}_{\mathcal{S}}\bigg)\text{.}
\end{split}
\end{align}

The last equality in \eqref{eq: supplement_eq1} follows from Lemma~\ref{lemma: 4}. We conclude:

\begin{align}
\begin{split}
v(z, t) &= E\bigg(\int^{t\wedge\tau}_0 e^{-rs}f^\pi(Z^{z,\pi}_s, t-s) ds +e^{-r(t\wedge\tau)}g(Z^{z, \pi}_{t\wedge\tau}, t\wedge\tau)\bigg)\\
&= V^{ g, \mathcal{E}, \pi}(z, t)
\end{split}
\end{align}

\end{proof}

We now describe the algorithm. Let $\pi_0$ be a Markov policy that is Lipschitz on compacts in $\mathbb{R}_{+}^2$. The algorithm is defined as follows:

\begin{equation}\label{eq: inductionPDE_0}
	\left\{
	\begin{array}{c}
	\mathcal{L}u_i - \pi_{i}(u_i)_y + f^{\pi_i} =0\\
	\pi_{i+1} (z, t) = \displaystyle\argmin_{p \in A} (L^p u_i(z, t) - ru_i(z, t) + f^p(z, t))\text{,}
\end{array}
\right.
\end{equation}

where the differential operator $\mathcal{L}$ is defined as

\begin{equation}\label{eq: def_mathcal_L}
\mathcal{L} := -\frac{\partial}{\partial t} - L
\end{equation}

using $L$ in \eqref{eq: def_L}. It is important to note that $\mathcal{L}$ is independent of $\pi_i$. In our problem, \eqref{eq: inductionPDE_0} can be further calculated as

\begin{equation}\label{eq: inductionPDE}
	\left\{
	\begin{array}{c}
	\mathcal{L}u_i - \pi_{i}(u_i)_y + \pi_{i}^2/{4\kappa Q} =0\\
	\pi_{i+1}(z, t) = 2\kappa{Q}(D_y u_i)\text{,}\
\end{array}
\right.
\end{equation}

where $D_y$ denotes the partial differential operator with respect to $y$.

We already know from Section~\ref{section: Mathematical_Setup} that the solution of the semilinear PDE \eqref{eq: abstractPDE} exists uniquely with bounded spatial derivatives, so instead of $\mathcal{A}$ in \eqref{eq: controlspace}, we can take a subset of $\mathcal{A}$ of which the controls are uniformly bounded. Also, note that ${D_y}u_i$ expresses the vega of the (approximated) market driver which we assumed to be a plain vanilla. As mentioned in Section~\ref{section: Introduction}, since plain vanilla options are positive vega products and $Q>0$, we know that $\pi_i$ is nonnegative from the definition in \eqref{eq: inductionPDE}. If we assume that \eqref{eq: FellerCondition} is satisfied, then we see that the condition \eqref{eq: FellerTypeCondition} is also satisfied. This precludes $Y$ from becoming negative.

In order to apply the PIA, we need to check if the algorithm \eqref{eq: inductionPDE} satisfies the criteria of the PIA. The only criterion needed to be verified is the uniform Lipschitz condition on $\pi_i$. The following lemma proves this.

\begin{lemma}\label{lemma: convergence_lemma}
$\{\pi_i\}_i$ defined in \eqref{eq: inductionPDE} is uniformly Lipschitz continuous.
\end{lemma}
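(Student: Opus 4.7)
The plan is to exploit the representation $\pi_{i+1} = 2\kappa Q (u_i)_y$ from \eqref{eq: inductionPDE} and derive a uniform-in-$i$ H\"older bound on the spatial gradient of $u_i$ via parabolic regularity applied to the linear PDE satisfied by $u_i$. Since $\pi_0$ is Lipschitz on compacts by hypothesis, the base case is immediate; the real task is to produce regularity bounds on $u_i$ that do not degrade with $i$.

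My first step is to establish a uniform $L^\infty$ bound on $\{\pi_i\}$. Monotonicity of the PIA (standard, see \cite{jms1, jms2}) gives $u_i \searrow u$, where $u$ is the solution of the semilinear PDE \eqref{eq: abstractPDE}, and the latter has bounded spatial derivatives on $\overline{Q_T}$ by Theorem 6.2 of Chapter V of \cite{lsu68}. In the plain-vanilla, $Q>0$ setting we also have $\pi_i = 2\kappa Q (u_{i-1})_y \ge 0$ (see Remark~\ref{remark1}), so each $\pi_i$ is sandwiched in a fixed, $i$-independent interval. A corresponding uniform $L^\infty$ bound on $u_i$ then follows from the control representation \eqref{eq: costfunction}, since $f^{\pi_i} = \pi_i^2/(4\kappa Q)$ is uniformly bounded and $\psi$ is bounded on $\Gamma_T$.

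Next I would apply parabolic regularity to the linear PDE for $u_i$. The principal part of $\mathcal{L}$ has fixed H\"older coefficients on $\overline{Q_T}$, and the first-order coefficient $\pi_i$ is uniformly bounded and measurable, so $L^p$-theory for linear parabolic equations yields a uniform $W^{2,1}_p$ bound on $u_i$; Sobolev embedding then gives a uniform $C^{1+\alpha, (1+\alpha)/2}$ bound on $u_i$, and in particular $\pi_{i+1} = 2\kappa Q (u_i)_y$ is itself uniformly bounded in $C^\alpha$. Feeding this H\"older regularity back into classical parabolic Schauder estimates (e.g.\ Chapter IV of \cite{lsu68}, or Corollary 1 of Section 3.5 of \cite{fri64}) upgrades the estimate on $u_i$ to a uniform $C^{2+\alpha, 1+\alpha/2}$ bound. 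Consequently $(u_i)_y$ is uniformly $C^{1+\alpha, (1+\alpha)/2}$, and hence $\pi_{i+1}$ is Lipschitz with a constant independent of $i$.

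The main obstacle is the apparent circularity between the regularity of $\pi_i$ (needed to apply Schauder estimates to $u_i$) and the regularity of $u_i$ (which defines $\pi_{i+1}$). What breaks the circle is that only an $L^\infty$ bound on $\pi_i$ is needed to invoke the $W^{2,1}_p$ estimates, since $\pi_i$ enters the linear equation only as a bounded first-order coefficient; the monotonicity of the PIA, together with the regularity of the fixed semilinear limit $u$, supplies that uniform boundedness and thereby initiates the bootstrap without assuming H\"older continuity of $\pi_i$ in advance.
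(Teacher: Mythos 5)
Your proposed route --- first establish a uniform $L^\infty$ bound on $\{\pi_i\}$, then use $L^p$-parabolic regularity (which needs only bounded measurable coefficients) and Sobolev embedding to get a uniform $C^{1+\alpha}$ bound on $u_i$ before feeding that into Schauder --- is genuinely different from the paper's. The paper applies the Schauder estimate $\lVert u_{i+1}\rVert_{2+\alpha}\le C(\lVert g\rVert_{2+\alpha}+\lVert f^{\pi_i}\rVert_\alpha)$ directly and asserts the right-hand side is uniformly bounded ``thanks to the uniform boundedness of $\pi_i\in\mathcal{A}$''. Your worry that the constant $C$ depends on the H\"older norm of the first-order coefficient $\pi_i$, which is precisely what the lemma is trying to control, is legitimate, and the paper's passage from an $L^\infty$ bound on $\pi_i$ to a uniform $C^\alpha$ bound on $f^{\pi_i}=\pi_i^2/4\kappa Q$ elides the distinction between the two norms. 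In that sense the $L^p$-theory detour is the more careful way to break the circularity, and it is not the argument in the paper.

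There is, however, a gap at the very first step of your bootstrap. You argue that PIA monotonicity gives $u_i\searrow u$ pointwise and that $u$ has bounded spatial derivatives, and conclude that $(u_{i})_y$, hence $\pi_{i+1}=2\kappa Q(u_i)_y$, lies in a fixed $i$-independent interval. But monotone pointwise convergence of $u_i$ to $u$ gives no control on the derivatives $(u_i)_y$: a sequence of smooth functions can decrease pointwise to a function with bounded gradient while its own gradients blow up along the way, and any interior gradient estimate you might invoke to fix this would again involve the coefficient $\pi_i$, reintroducing the circularity. So the uniform $L^\infty$ bound on $\pi_i$ is not established by this argument, and without it the $W^{2,1}_p$ estimates cannot get off the ground. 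The paper sidesteps this by imposing the bound on the controls by fiat: in the paragraph preceding the lemma it restricts attention to the subset of $\mathcal{A}$ consisting of uniformly bounded controls, rather than deriving boundedness from monotonicity, although it does not verify that the $\argmin$ step in \eqref{eq: inductionPDE} keeps $\pi_{i+1}$ inside that restricted set.
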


\begin{proof}
From the Schauder estimate, we have

\begin{equation}\label{eq: SchauderEstimate}
\lVert u_{i+1} \rVert_{2+\alpha} \le C(\lVert g \rVert_{2+\alpha} + \lVert f^{\pi_n} \rVert_{\alpha})\text{,}
\end{equation}

where $C$ only depends on the H{\"o}lder norms of the coefficients of $L^\pi$, the domain $Q_T$, and $\nu_1$ in \eqref{eq: uniformelliptic}.
If $g$ is continuous, we can approximate it uniformly in $2+\alpha$ norm by the  Weierstrass approximation theorem as mentioned on page 71 in \cite{fri64}. In our specific problem, $f^{\pi_n} = {\pi_n}^2/4\kappa{Q}$, so $\lVert f^{\pi_n}\rVert_{\alpha}$ is uniformly bounded thanks to the uniform boundedness of $\pi_i \in \mathcal{A}$. As the right hand side of \eqref{eq: SchauderEstimate} is uniformly bounded, $(u_i)_i$ is uniformly bounded in $2+\alpha$ norm, hence $\pi_i$ is uniformly Lipschitz continuous from the second equation in \eqref{eq: inductionPDE}.
\end{proof}

The PIA tells us that the $u_i$ in \eqref{eq: inductionPDE} converges and the limit function is $V^{g,\mathcal{E}}$ which is ${C}^{2,1}$ and satisfies the HJB equation \eqref{eq: NonlinearIBVControl} in $Q_T$.

We will see later in the actual numerical example that the convergence to the solution happens fast. In the case of a plain call option as the market driver, we get a numerical solution very close to that of the semilinear PDE with only 1 iteration.

\begin{proposition}\label{prop: 2}
$\lVert u_{i+2} - u_{i+1}\rVert_{2+\alpha} \le C\kappa{Q}\lVert u_{i+1} - u_i\rVert_{2+\alpha}^2$
\end{proposition}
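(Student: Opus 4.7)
The plan is to derive a linear parabolic equation satisfied by the difference $w := u_{i+2} - u_{i+1}$ whose inhomogeneity is quadratic in $(u_{i+1} - u_i)_y$, and then apply the Schauder estimate exactly as in the proof of Lemma~\ref{lemma: convergence_lemma}.

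First I would write down the two iteration equations \eqref{eq: inductionPDE} for $u_{i+1}$ and $u_{i+2}$, substituting the explicit form $\pi_{i+1} = 2\kappa Q (u_i)_y$ and $\pi_{i+2} = 2\kappa Q (u_{i+1})_y$:
\begin{align*}
\mathcal{L}u_{i+1} - 2\kappa Q (u_i)_y (u_{i+1})_y + \kappa Q (u_i)_y^2 &= 0,\\
\mathcal{L}u_{i+2} - 2\kappa Q (u_{i+1})_y (u_{i+2})_y + \kappa Q (u_{i+1})_y^2 &= 0.
\end{align*}
Subtracting and using $(u_{i+2})_y = w_y + (u_{i+1})_y$, the cross terms collapse via the algebraic identity $-2ab + a^2 + b^2 = -(a-b)^2$ (with $a = (u_{i+1})_y$, $b = (u_i)_y$), leaving
\begin{equation*}
\mathcal{L}w - \pi_{i+2}\, w_y \;=\; \kappa Q\bigl((u_{i+1} - u_i)_y\bigr)^{2}.
\end{equation*}
Since $u_{i+1}$ and $u_{i+2}$ share the same initial and boundary data $\psi$ (they solve the linear problem \eqref{eq: inductionPDE_0} with identical boundary conditions), $w$ vanishes on $\Gamma_T$.

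Next I would treat this as a linear parabolic initial/boundary value problem for $w$ and apply the Schauder estimate: with zero data, $w$ is controlled in $C^{2+\alpha,1+\alpha/2}(\overline{Q_T})$ by the $\alpha$-Hölder norm of the right-hand side,
\begin{equation*}
\lVert w\rVert_{2+\alpha} \;\le\; C\,\kappa Q\,\bigl\lVert \bigl((u_{i+1}-u_i)_y\bigr)^2\bigr\rVert_{\alpha}.
\end{equation*}
The constant $C$ depends on the coefficients of $\mathcal{L}$, on the domain, and on the Hölder norm of the drift coefficient $\pi_{i+2}$; but by Lemma~\ref{lemma: convergence_lemma} the policies $\pi_i$ are uniformly Lipschitz on compacts, so $C$ can be chosen independent of $i$.

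Finally I would apply the standard Hölder product inequality $\lVert fg\rVert_\alpha \le \lVert f\rVert_\alpha\lVert g\rVert_\infty + \lVert f\rVert_\infty\lVert g\rVert_\alpha$ with $f=g=(u_{i+1}-u_i)_y$, together with the continuous embedding $\lVert (u_{i+1}-u_i)_y\rVert_\alpha \le \lVert u_{i+1}-u_i\rVert_{2+\alpha}$, to obtain
\begin{equation*}
\bigl\lVert \bigl((u_{i+1}-u_i)_y\bigr)^2\bigr\rVert_{\alpha} \;\le\; 2\,\lVert u_{i+1}-u_i\rVert_{2+\alpha}^{\,2},
\end{equation*}
which, combined with the Schauder bound, yields the claimed inequality (after absorbing the factor $2$ and the uniform Schauder constant into $C$).

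The main obstacle is verifying the telescoping algebra so the PDE for $w$ has a genuinely quadratic right-hand side (the cancellation of the linear-in-$(u_{i+1}-u_i)_y$ terms is the whole reason the algorithm is quadratically convergent); once this is in hand, the rest is a direct application of linear Schauder theory together with Lemma~\ref{lemma: convergence_lemma} to secure uniformity of the constant $C$ across iterations.
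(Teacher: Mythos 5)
Your argument is correct and follows the paper's proof exactly: subtract the two iterated linear equations, observe that the linear-in-difference terms combine into the drift $-\pi_{i+2}\,w_y$ while the remainder telescopes to the perfect square $-\kappa Q\bigl((u_{i+1}-u_i)_y\bigr)^2$, and then apply the Schauder estimate to the linear problem for $w$ with zero parabolic-boundary data, using Lemma~\ref{lemma: convergence_lemma} to keep the Schauder constant uniform across iterations. (One small slip in exposition: the combination that actually appears after the $w_y$ substitution is $-a^2+2ab-b^2 = -(a-b)^2$, not $-2ab+a^2+b^2 = -(a-b)^2$ as you wrote; your resulting equation for $w$ is nevertheless correct.)
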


\begin{proof}
By definition and Proposition~\ref{prop: 1}

\begin{equation}
	\left\{
	\begin{array}{c}
	\mathcal{L}u_{i+2} - \pi_{i+2}(u_{i+2})_y +\pi^2_{i+2}/4\kappa{Q} = 0\\
	\mathcal{L}u_{i+1} - \pi_{i+1}(u_{i+1})_y +\pi^2_{i+1}/4\kappa{Q} = 0\text{.}
\end{array}
\right.
\end{equation}

Subtracting these 2 equations and setting $v_{i+2}:=u_{i+2}- u_{i+1}$,

\begin{equation}
\mathcal{L}v_{i+2} - \pi_{i+2}(v_{i+2})_y -(\pi_{i+2} - \pi_{i+1})^2/4\kappa{Q} =0\text{.}
\end{equation}

Since $v_i$ is 0 on the parabolic boundary, from the Schauder estimate:

\begin{align}
\begin{split}
\lVert v_{i+2} \rVert_{2+\alpha} \le C\lVert \frac{\pi_{i+2} - \pi_{i+1}}{4\kappa{Q}}\rVert^2_\alpha= C\kappa{Q}\lVert (v_{i+1})_y\rVert^2_{\alpha}\le C\kappa{Q}\lVert v_{i+1}\rVert^2_{2+\alpha}
\end{split}
\end{align}

\end{proof}

Proposition~\ref{prop: 2} shows that if the approximation of the solution is close enough to the classical solution of the semilinear PDE, $\{u_i\}_i$ converges quadratically to the solution. In other words, Proposition~\ref{prop: 2} shows the quadratic local convergence of the solutions of the PIA to the classical solution.

\begin{corollary}
\begin{equation}
\lVert u_{i+1} - u_{i}\rVert_{2+\alpha} \le (C\kappa{Q}\lVert u_{1} - u_0\rVert_{2+\alpha})^{2^{i}-1}\lVert u_{1} - u_0\rVert_{2+\alpha}
\end{equation}
\end{corollary}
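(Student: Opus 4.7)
\medskip

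The plan is to prove the bound by induction on $i$, using Proposition~\ref{prop: 2} as the one-step recursion. To streamline notation, I would set $a_i := \lVert u_{i+1} - u_i\rVert_{2+\alpha}$ and $K := C\kappa Q$, so that Proposition~\ref{prop: 2} reads $a_{i+1} \le K a_i^2$, and the claim becomes $a_i \le (Ka_0)^{2^i - 1} a_0$.

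For the base case $i = 0$ the right-hand side is $(Ka_0)^{2^0 - 1} a_0 = (Ka_0)^0 a_0 = a_0$, so the inequality is an equality. For the inductive step, assuming $a_i \le (Ka_0)^{2^i - 1} a_0$, I would square both sides, multiply by $K$, and apply Proposition~\ref{prop: 2}:
\begin{align*}
a_{i+1} \;\le\; K a_i^2 \;\le\; K \bigl[(Ka_0)^{2^i - 1} a_0\bigr]^2 \;=\; K (Ka_0)^{2^{i+1} - 2} a_0^2 \;=\; (Ka_0)^{2^{i+1} - 1} a_0,
\end{align*}
which is exactly the induction hypothesis at level $i+1$. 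This closes the induction and proves the corollary.

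There is no real obstacle here: the whole content sits inside Proposition~\ref{prop: 2}, and the corollary is just the closed form of iterating the quadratic recursion $a_{i+1} \le K a_i^2$. The only thing to be careful about is the bookkeeping of the exponents $2^i - 1$, which is why I would introduce the shorthand $a_i, K$ before starting the induction. One minor remark worth inserting after the proof is that the bound guarantees super-exponential (in fact doubly-exponential, i.e.\ quadratic) decay as soon as $Ka_0 < 1$, which is precisely the ``close enough to the classical solution'' regime noted after Proposition~\ref{prop: 2}; this is the practical content of the corollary and justifies the empirical observation that one iteration of the PIA already suffices in the plain-vanilla market-driver case.
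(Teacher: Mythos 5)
Your proof is correct and is exactly the argument the paper has in mind: the paper's entire proof is the one line ``Use Proposition~\ref{prop: 2} and induction,'' and you have simply carried out that induction, with the exponent bookkeeping done correctly. The closing remark about $Ka_0<1$ being the local-convergence regime is a sensible addition consistent with the paper's discussion after Proposition~\ref{prop: 2}.
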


\begin{proof}
Use Proposition~\ref{prop: 2} and induction.
\end{proof}

\section{Numerical Simulation}
\label{section: Numerical_Simulation}

We now numerically investigate how the pricing and risks change with our model. We assume that a large amount of 2 year, 120 strike call is owned by investors outside the OTC market. We first price this structure using \eqref{eq: Maeda_Model_F}. Then, substituting this solution in \eqref{eq: Maeda_Model_PDE}, we price a different derivatives product, a 2 year, 100 strike call. We compare the results with the ones obtained from the Heston model. We used the explicit FDM method. We note that with sufficiently fine mesh in the discretization, the numerical solutions converge to the analytic ones. Hereafter, we refer to the 2 year, 120 strike call as 120 call and 2 year, 100 strike call as 100 call or at-the-money (ATM) call.

We use the parameters in Table~\ref{table: parameters}.

\begin{table}[h!]
\begin{center}
    \begin{tabular}{ | l | p{2cm} |}
    \hline
    Parameter & Value\\ \hline
    $Q$ & 0.0003 \\ \hline
    $r$ & 3.0\% \\ \hline
    $\rho$ & -0.7571 \\ \hline
    $\eta$ & 0.3 \\ \hline
    $\omega$ & 1.0 \\ \hline
    $\bar{v}$ & 0.04 \\ \hline
    $\kappa$ & 0.55 \\ \hline
    \end{tabular}
\caption{Parameters for numerical simulation.}
\label{table: parameters}
\end{center}
\end{table}

Note that Feller's condition \eqref{eq: FellerCondition} is met and $Q>0$. From Remark~\ref{remark1}, the positive variance condition \eqref{eq: NewFellerCondition} is therefore satisfied.

We take our domain $\mathcal{E}$ to be a round rectangle and denote by $S_{min}$, $S_{max}$, $v_{min}$, and $v_{max}$  the minimum and maximum values of the variables in the domain. In this example, we took $S_{min}=0.5$, $S_{max}=200$, $v_{min}=0.00005$, and $v_{max}=1.0$ and discretized each interval by 50. For the time interval $[0,2]$, we discretized it similarly by 30,000. We denote by $F_H$ the value $F$ calculated in the Heston model and by $F_N$ the value calculated in the new model. Similarly, we denote by $V_H$ and $V_N$ the corresponding values for an arbitrary $V$.

As in \cite{hes93}, for the calculation in the Heston model, we use the initial and boundary conditions:

\begin{equation}\label{eq: HestonBoundaryConditions}
	\left\{
	\begin{array}{l}
	F_H(S,v,0) = \max(0, S - K)^{+}	\qquad \qquad (S, v) \in \Omega\\
	F_H(S_{min}, v, t) = 0 \qquad \qquad \qquad \qquad \quad (S=S_{min}) \\
	\frac{\partial F_H}{\partial S}(S_{max}, v, t) = 1	\qquad \qquad \qquad  \qquad (S=S_{max})\\
	\frac{\partial F_H}{\partial t} - rS\frac{\partial F_H}{\partial S} + rF_H -\kappa\bar{v}\frac{\partial F_H}{\partial v} =0 \quad (v= v_{min})\\
	F_H(S, v_{max},t) = S \qquad \qquad \qquad  \qquad (v = v_{max})\text{.}
\end{array}
\right.
\end{equation}

The solution to the initial-boundary problem of Heston's PDE with conditions \eqref{eq: HestonBoundaryConditions} is continuous up to the boundary, so we can use the value of $F_H$ as the boundary condition for $F_N$. This way, the values of $F_H$ and $F_N$ match on the parabolic boundary.

We use corresponding boundary conditions for $V$.

With the parameters in Table~\ref{table: parameters}, the drift in the second SDE of \eqref{eq: Maeda_Model} is shifted by $\kappa{Q}\frac{\partial F_N}{\partial v}$, which in this case is calculated as $0.55\times0.0003\times77.188 = 0.0127$. This is about $58\%$ of $\kappa\overline{v}$.

The result for the 120 call (which in our case is the market driver) is shown in Table~\ref{table: 120call}.

\begin{table}[h!]
\begin{center}
    \begin{tabular}{ | l | p{1.5cm} | p{1.5cm} | p{1.5cm} | p{1.5cm} | p{1.5cm} |}
    \hline
    Risks & Value & Delta & Vega & Vanna & Volga\\ \hline
    Heston & 2.6058 & 35.378\%&70.940 & 3.8766& 119.001\\ \hline
    New Model & 3.5121 &42.457\% & 77.188& 2.4132&-535.557\\ \hline
    \end{tabular}
\caption{Summary for $120$ call at $S=98.255$ and $v = 0.030049$}
\label{table: 120call}
\end{center}
\end{table}

The result for the other derivative product (in our case, an at-the-money call) is shown in Table~\ref{table: 100call}.

\begin{table}[h!]
\begin{center}
    \begin{tabular}{ | l | p{1.5cm} | p{1.5cm} | p{1.5cm} | p{1.5cm} | p{1.5cm} |}
    \hline
    Risks & Value & Delta & Vega & Vanna & Volga\\ \hline
    Heston &11.299 & 74.117\%&79.238 & -0.5666 & -543.263\\ \hline
    New Model & 12.116& 76.942\%& 78.824& -1.6201&-961.800\\ \hline
    \end{tabular}
\caption{Summary for at-the-money (ATM) call at $S=98.255$ and $v = 0.030049$}
\label{table: 100call}
\end{center}
\end{table}

The results are for $S=98.255$ and $v = 0.030049$ at time $t=T=2$. In volatility convention (i.e.\  standard deviation, as traders usually prefer this over variance), this value of $v$ is equivalent to $\sigma = \sqrt{v} = 17.335\%$.

The obvious result is that the options are priced higher under the new model and we see it from Table~\ref{table: 120call} and Table~\ref{table: 100call}. This is due to the current set-up that the $120$ call (which is a positive vega product) is held outside of the OTC market. Since the OTC market is then overall short vega, or in other words, short volatility, the model correctly adjusts the level of the volatility which is now in demand. If we calculate the equivalent volatilities in the Heston model based on the prices we get from the new model, we get the correspondance shown in Table~\ref{table: Comparison}.

\begin{table}[h!]
\begin{center}
    \begin{tabular}{ | l | p{2cm} | p{2cm} |}
    \hline
    & $120$ Call & ATM Call\\ \hline
    Heston & 17.335\% & 17.335\% \\ \hline
    New Model & 20.694\% & 20.090\%\\ \hline
    Difference & 3.359\% & 2.755\%\\ \hline
    \end{tabular}
\caption{Implied volatility calculated based on the risk calculated in the Heston model}
\label{table: Comparison}
\end{center}
\end{table}

\noindent{From Table~\ref{table: Comparison}, we see that the volatility is higher, and the increments against the Heston volatilities are different for different structures. The result of Table~\ref{table: Comparison} shows a skewness of the impact the market driver has on the volatility.} 

To understand how large this difference in the implied volatility is, we can assume that the vega traders maintain ranges between  $\pm$\$10 million. With 3\% difference in volatility as shown in Table~\ref{table: Comparison}, if they are short \$10 million of vega, their mark-to-market loss would be -\$30 million. If their goal is to raise \$100 million of profit in a year, then this loss already corresponds to 30\% of the annual target. 

Figure~\ref{fig: 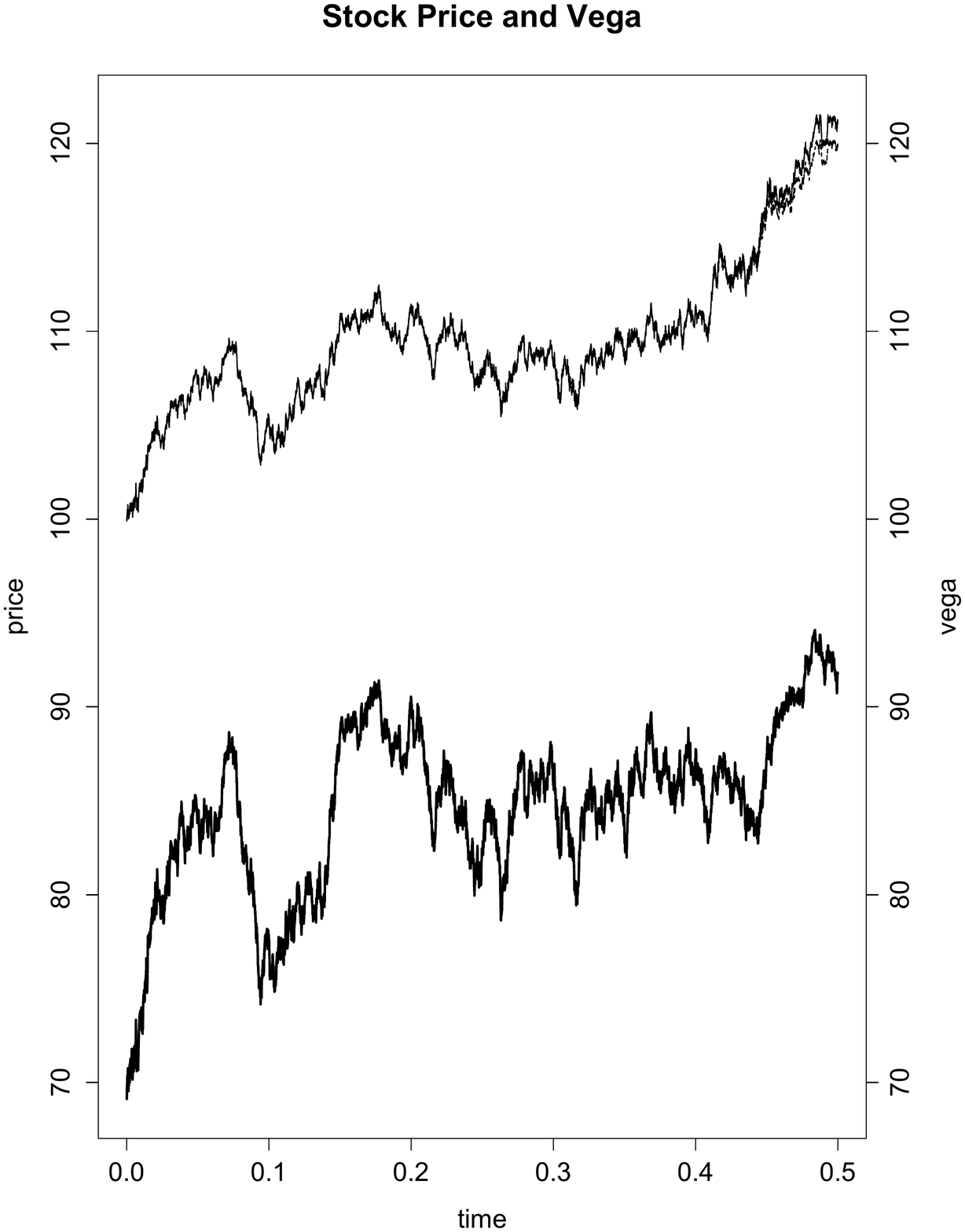} and Figure~\ref{fig: 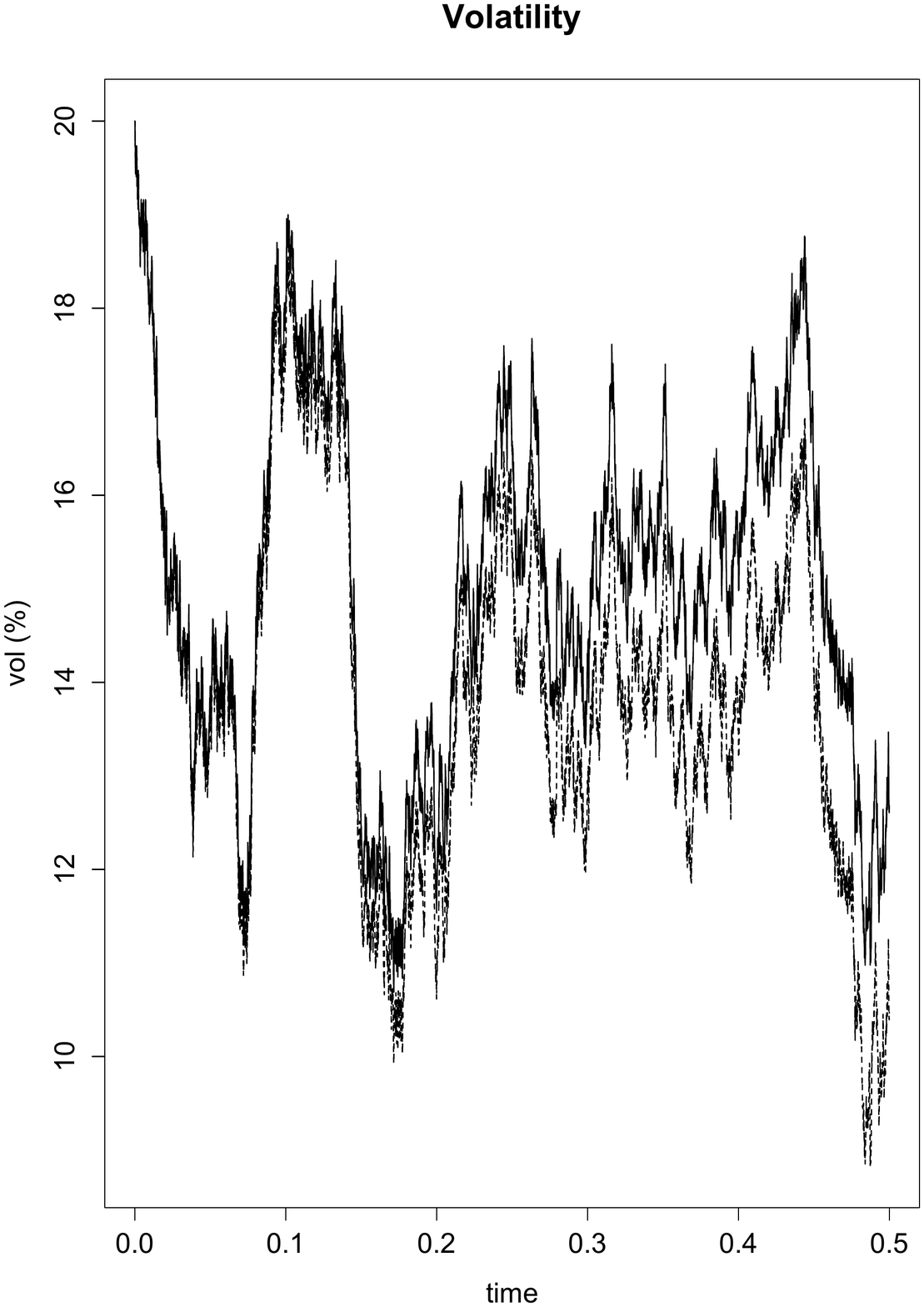} show a simulation of the processes of the stock price and the volatility.

\begin{figure}[h!]
	\centering
		\includegraphics[height=4in, width =4in]{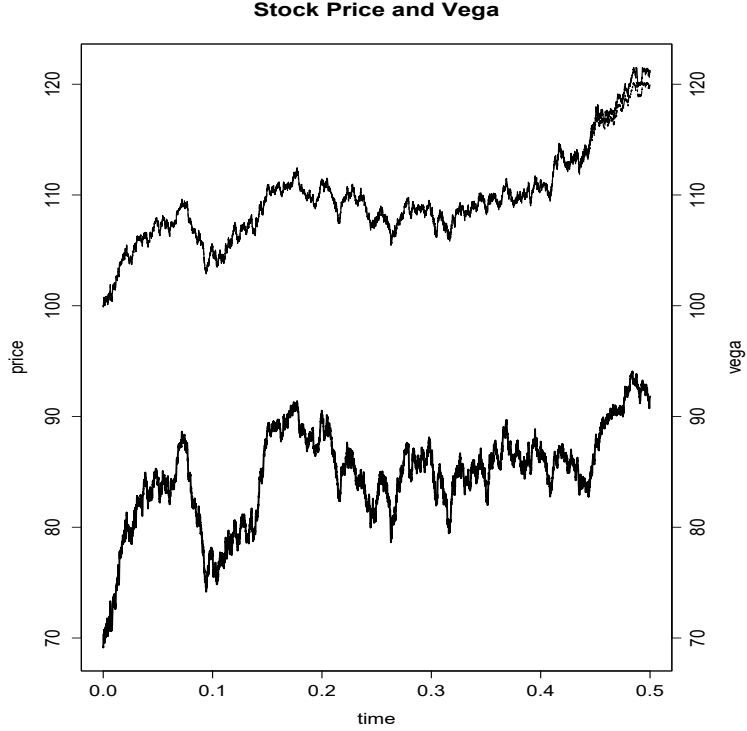}
	\caption{Simulation of the SDEs \eqref{eq: Maeda_Model} for the first 6 months starting from $S=100$ and $v = 0.04$ with $F$ being the value of the 2Y 120 call. We plotted both the stock price processes of the Heston (dotted line) and of the new model (solid line) on the upper half of the graph. The largest difference in absolute value of the realizations of the two price processes is 0.7336, which corresponds to 73.36 basis points to the initial stock price. We used the drift $\mu = 0.05$. The lower graph shows how the vega of the call in the new model changes over time.}
	\label{fig: stock_vega3.pdf}
\end{figure}

\begin{figure}[h!]
	\centering
		\includegraphics[height=4in, width =4in]{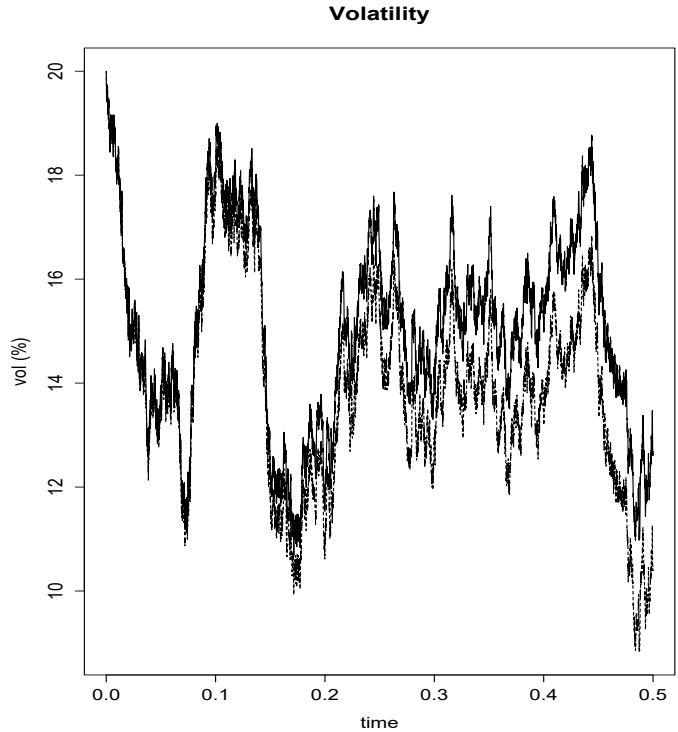}
	\caption{The volatility processes on the same simulation as in Figure~\ref{fig: stock_vega3.pdf}, where the dotted line corresponds to that of the Heston model and the solid line to that of the new model.}
	\label{fig: vol_final5.pdf}
\end{figure}

As mentioned in Section~\ref{section: Introduction}, this model prices-in not only the initial impact when some big position is traded with clients, but also the adjusted impact due to the change in the risk of the market driver. The risks change as the market moves, therefore the way traders hedge options changes under the new model. This is reflected in the graphs of the delta, vanna, and volga risks calculated in the new model compared to the ones calculated in the Heston model in Figure~\ref{fig: Graphs}. The difference in each risk is plotted in Figure~\ref{fig: Difference}. 

\begin{figure}[h]
    \centering
    \includegraphics[width=\textwidth, height=0.5\textheight]{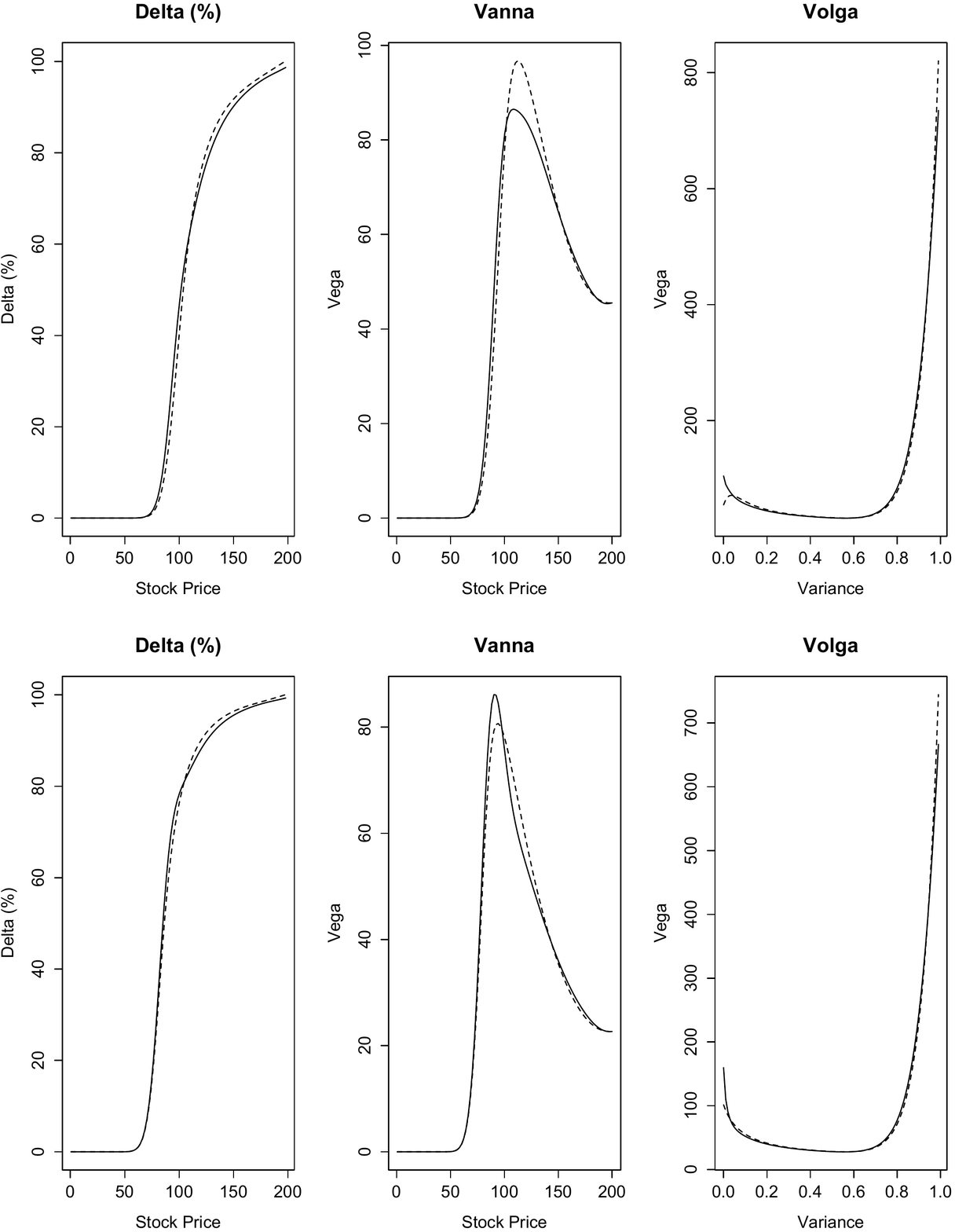}
    \caption{Risks of the calls; Top 3 charts are for 120 call and the bottom 3 are for ATM call. The solid lines indicate the risks calculated in the new model and the dotted line the corresponding risks calculated in the Heston model.}
    \label{fig: Graphs}
\end{figure}

\begin{figure}[h]
    \centering
    \includegraphics[width=\textwidth, height=0.5\textheight]{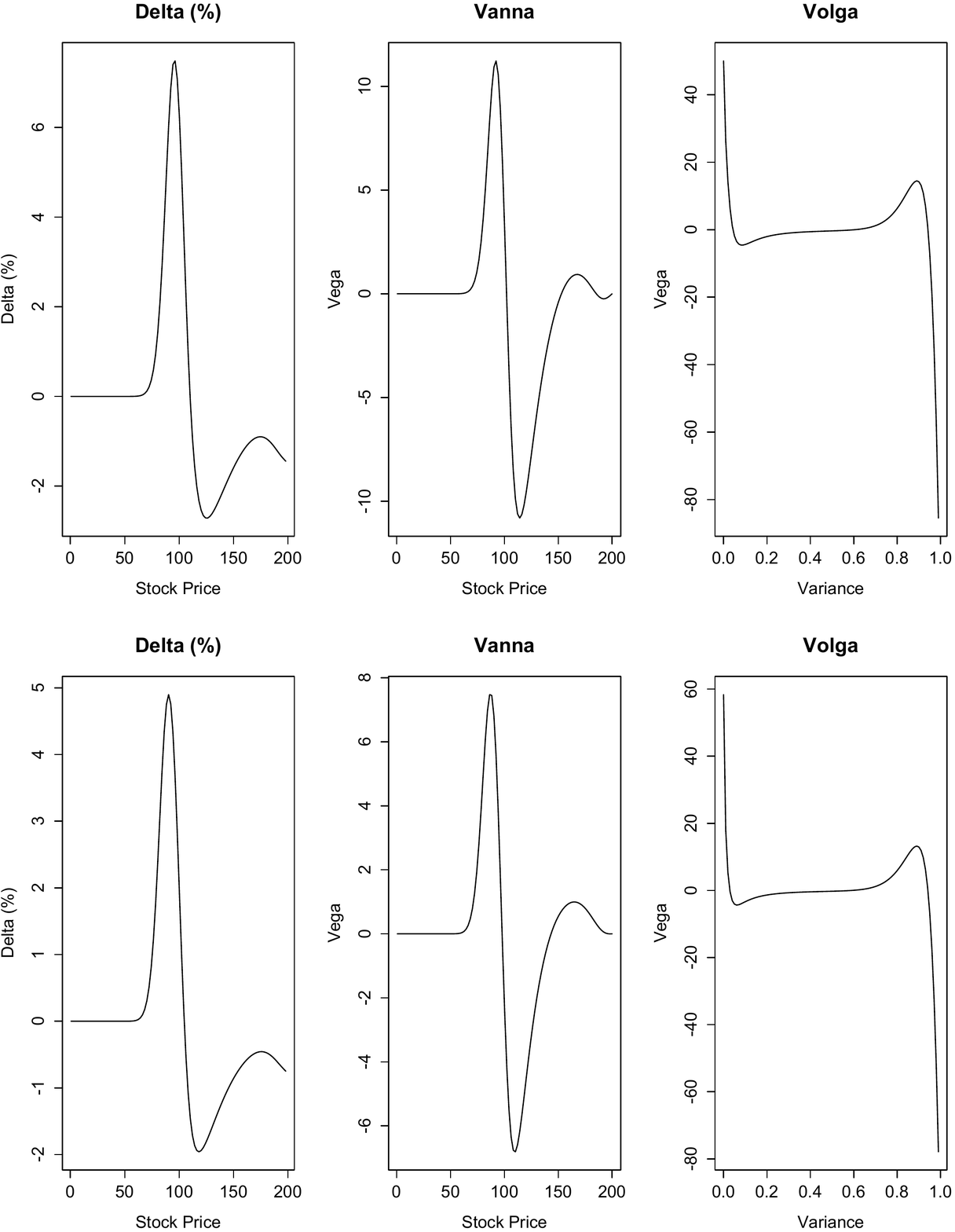}
    \caption{ The difference plotted between the values in the new model and the Heston model from Figure~\ref{fig: Graphs}.}
    \label{fig: Difference}
\end{figure}

For example, when we check the delta on Table~\ref{table: 120call} and Table~\ref{table: 100call}, the values are higher in the new model. This is because traders lose money when the stock price goes higher. To explain this in more detail, when the stock price goes higher, the vega of the $120$ call gets larger since the stock price gets closer to the strike 120. This makes the traders in the OTC market get shorter in vega, hence they will even be more eager to buy the volatility in the market. This shifts the volatility higher. The consequence of this is that the traders will lose in mark-to-market because the value of the call they are short is greater now due to the spike in volatility. The new model anticipates this and asks the traders to buy more stocks beforehand so that they are hedged from this event.

We now see what happens when we apply the PIA to the semilinear case in calculating the value of 120 call. We take $\pi_0 \equiv 0$ so that the solution of 0th iteration matches with the one from the Heston model. The result is shown in Figure~\ref{fig: 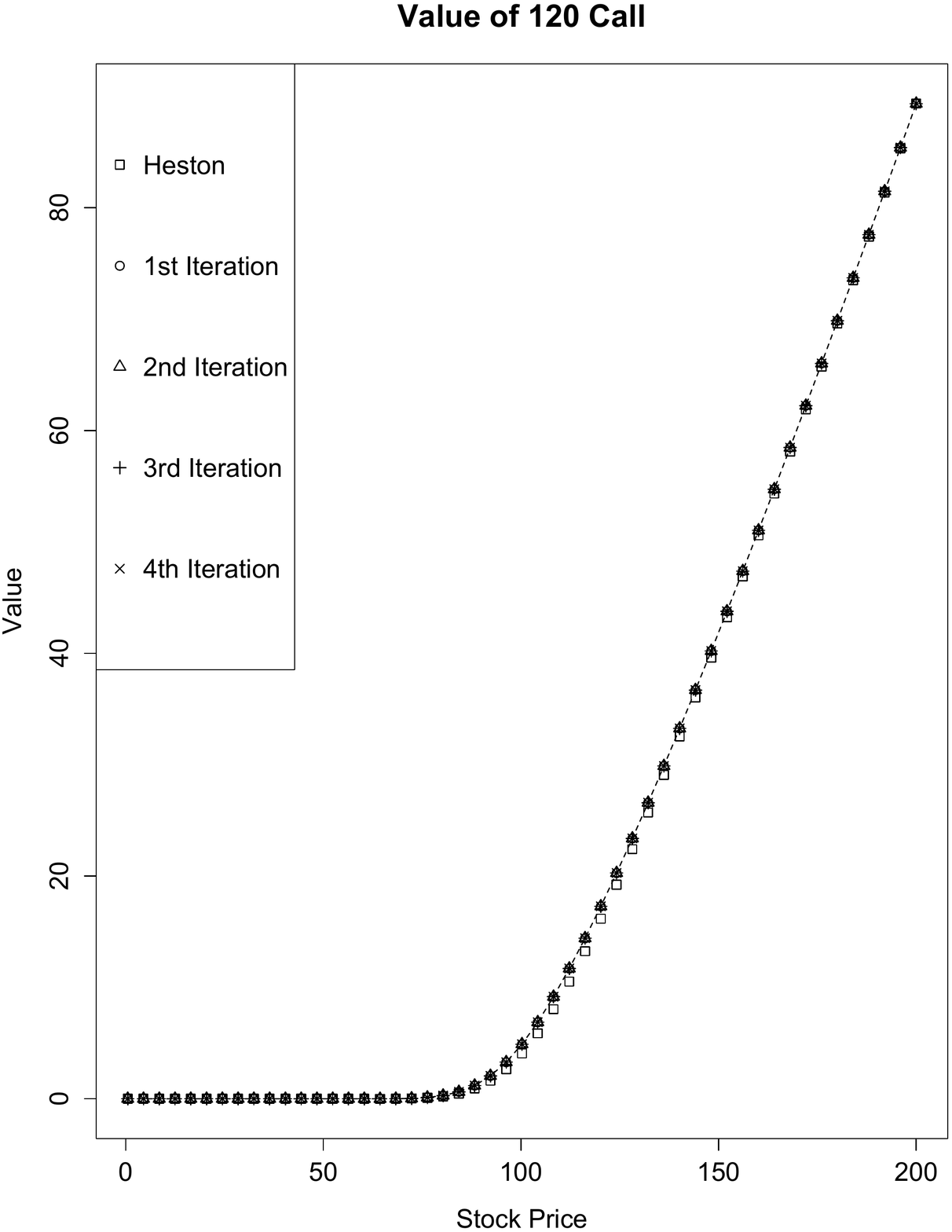}. We tried up to 4th iteration as it implies convergence in numerical solution at this point as shown in Table~\ref{table: PIADiff}.

\begin{table}[h!]
\begin{center}
    \begin{tabular}{ | l | p{1.9cm} | p{1.9cm} | p{1.9cm} | p{1.9cm} | p{1.9cm} |}
    \hline
    Iteration & 0th & 1st & 2nd & 3rd & 4th\\ \hline
    Difference & 2.3177 & 0.0322&7.72 $\times 10^{-6}$ & 0.000& 0.000\\ \hline
    \end{tabular}
\caption{Largest differences in absolute value between the numerical solutions of the approximated linear PDE and the original semilinear PDE. The figures could be regarded as the differences in percentage against the initial price of the stock as it is set to 100.}
\label{table: PIADiff}
\end{center}
\end{table}

\begin{figure}[h!]
	\centering
		\includegraphics[height=4in, width =4in]{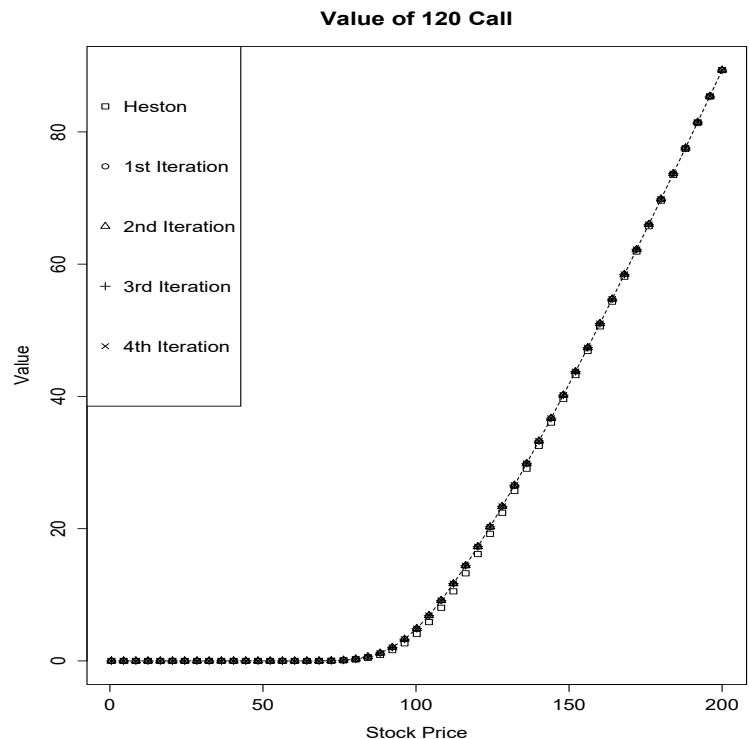}
	\caption{PIA results for 120 call. Dotted line is the solution using Finite Difference Method (FDM) directly on the semilinear PDE. $v$ is taken as $v=0.040048$.}
	\label{fig: PIA2.pdf}
\end{figure}

In Figure~\ref{fig: 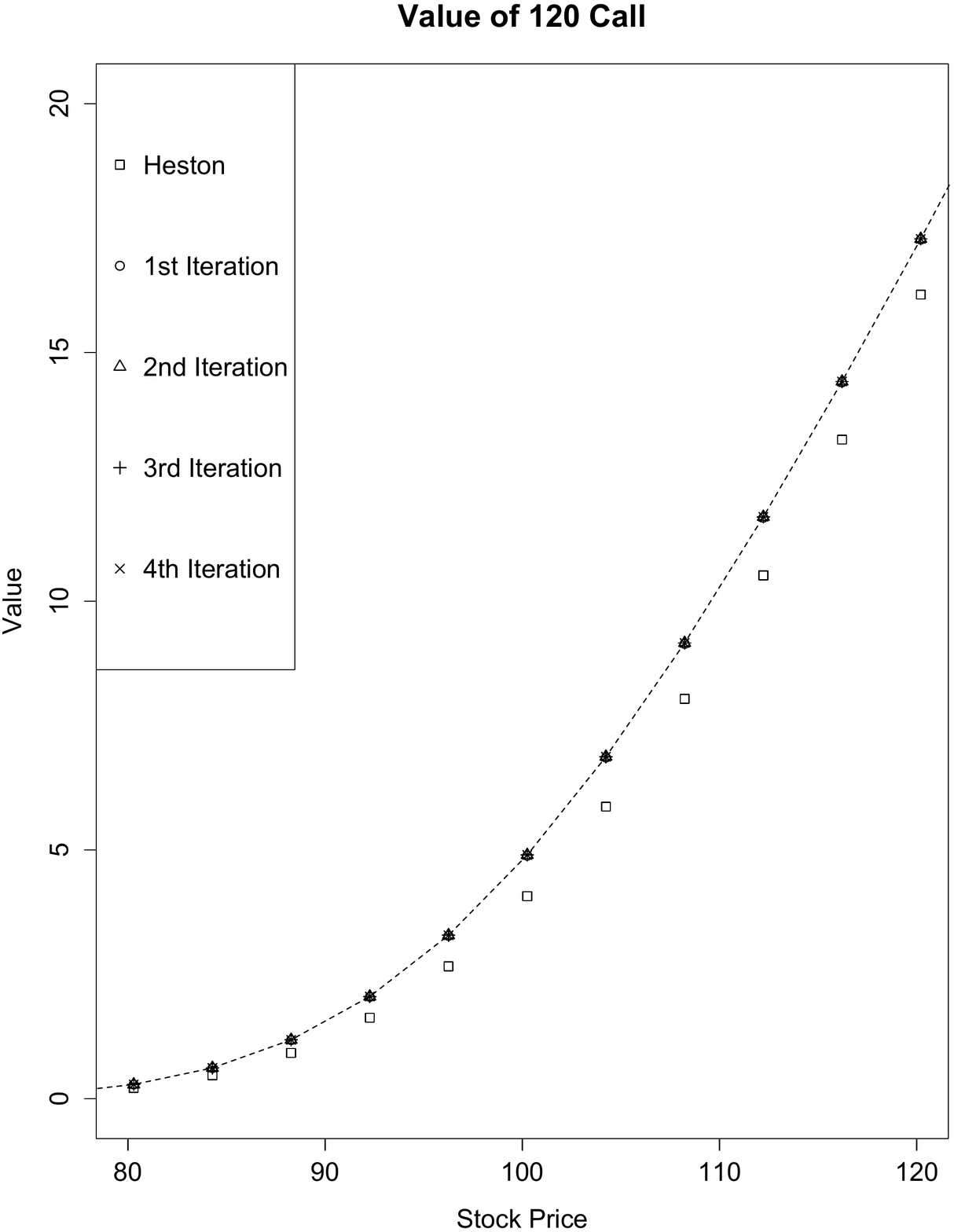}, we show a magnification of Figure~\ref{fig: PIA2.pdf} centered around the stock price where we saw the largest difference, which happened to be at-the-money. 

\begin{figure}[h!]
	\centering
		\includegraphics[height=4in, width =4in]{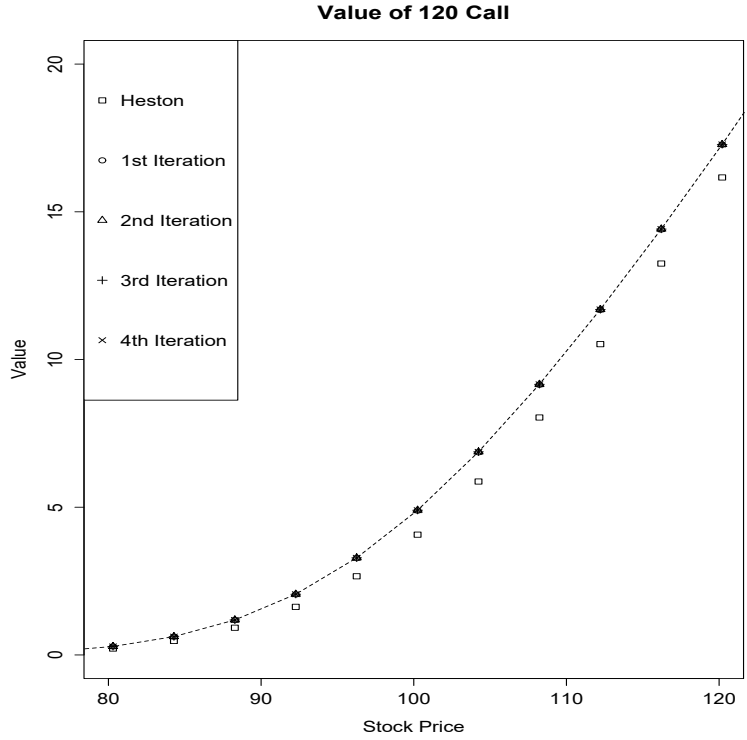}
	\caption{Magnification around at-the-money of Figure~\ref{fig: PIA2.pdf}. We see that the first iteration already approximates well the numerical solution to the semilinear PDE.}
	\label{fig: PIA3.pdf}
\end{figure}

We see in Figure~\ref{fig: PIA3.pdf} that the numerical solution of the semilinear PDE is different from that of the Heston model (0th iteration), but the 1st iteration in the PIA already brings the solution very close to that of the semilinear PDE. This is also implied by the result from Table~\ref{table: PIADiff}. This means that the numerical solution of the semilinear PDE is well approximated by a series of linear PDEs. This is good news as we don't have to create a separate program to calculate the solution to the new model, but can just reuse the same program for the Heston model with modified coefficients. The PIA also appears to have better convergence compared to the explicit FDM on  a Dirichlet boundary value problem of a second order semilinear elliptic PDE.

\section{Conclusions}
\label{section: Conclusion}

We introduced a new model which reflects the impact of a large position that is skewing the volatility market. We also introduced the Policy Improvement Algorithm. The algorithm lets us handle a semilinear PDE  as a series of linear PDEs and at the same time keep the calculation load similar to that when we run the FDM on the original semilinear problem, thanks to the fast convergence of the iterations. This enables us to easily implement the new model in practice by reusing the resources used for the Heston model which has already been widely used in the industry.

We only used a single product as a market driver, but we might try to extend this to the case when it is of a portfolio of several products. We only used a plain vanilla option as the market driver, but we should also be able to extend the model to be used for more exotic options. The difficulty then is to show the existence and uniqueness of the solution to the semilinear PDE \eqref{eq: Maeda_Model_F} and to check if the solution satisfies the positive variance condition \eqref{eq: NewFellerCondition}. If so, then by substituting this solution in the coefficient of the linear PDE \eqref{eq: Maeda_Model_PDE}, we can solve for the values of other derivatives products as in the case of the Heston model. It only takes relatively small effort to allow for the market asymmetry and to get the correct risks driven by the market driver.

The other difficulty in applying the model to actual trading appears in the calibration process. We assumed that we knew all the parameters including the detail of the market driver, but it may be challenging to recover these in the actual market, especially with more freedom in the model than in the Heston model and with limited market information.

\section*{Acknowledgements}
We would like to thank Aleksandar Mijatovi{\'c} for reading the draft and providing helpful suggestions and insights.

%\appendix

\renewcommand{\thesubsection}{\Alph{subsection}}
\subsection*{Appendix}
\numberwithin{figure}{subsection}
\numberwithin{equation}{subsection}
\numberwithin{theorem}{subsection}

\subsection{Lemmas}\label{App: AppendixA}

The lemmas stated here are more or less those in \cite{jms1} and \cite{jms2}. We only modify them to fit our problem. We show them here, however, so that this paper is self-contained. 

A property that forms the basis of the following lemmas is that processes controlled by Markov policies are strong Markov processes (Theorem 4.20 in \cite{ks98}). 

\begin{lemma}\label{lemma: 4}
For every Markov policy $\pi$, $z\in \mathcal{E}$, $0<t<T$, and any stopping time $\mathcal{S}$ that is almost surely less than $t \wedge\tau_\Omega$,

\begin{align}\label{eq: StrongMarkov}
\begin{split}
\mathbb{E}\bigg(\int^{t\wedge \tau}_0 e^{-rs}f^\pi(Z^{z,\pi}_s, t-s) ds +e^{-r(t\wedge\tau)}g( Z^{z,\pi}_{t\wedge \tau}, t\wedge\tau)\bigg| \mathcal{F}_\mathcal{S}\bigg)\\
=  \int^{\mathcal{S}}_0 e^{-rs}f^\pi( Z^{z,\pi}_s, t-s) ds + e^{-r\mathcal{S}}V^{g, \mathcal{E}, \pi} (&Z^{z,\pi}_\mathcal{S}, t-\mathcal{S})\text{.}
\end{split}
\end{align}

In particular, the process \\
$(\int^{T'}_0 e^{-rs}f^\pi(Z^{z,\pi}_s, t-s)ds + e^{-rT'}V^{g, \mathcal{E}, \pi}(Z^{z,\pi}_{T'}, T'))_{T'\le T}$ is a uniformly integrable martingale.
\end{lemma}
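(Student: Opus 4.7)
The plan is to derive the identity \eqref{eq: StrongMarkov} by splitting the integral at $\mathcal{S}$ and invoking the strong Markov property of the controlled diffusion $Z^{z,\pi}$ under a Markov policy, which is the property cited at the opening of the Appendix. The uniformly integrable martingale statement will then follow as a direct corollary by applying the identity at successive stopping times together with the tower property.

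For the identity itself, since $\mathcal{S}\le t\wedge\tau$ almost surely I would split
\begin{align*}
\int_0^{t\wedge\tau} e^{-rs}f^\pi(Z^{z,\pi}_s, t-s)\, ds \;=\; \int_0^{\mathcal{S}} e^{-rs}f^\pi(Z^{z,\pi}_s, t-s)\, ds \;+\; \int_{\mathcal{S}}^{t\wedge\tau} e^{-rs}f^\pi(Z^{z,\pi}_s, t-s)\, ds.
\end{align*}
The first summand is $\mathcal{F}_\mathcal{S}$-measurable and therefore passes through the conditional expectation untouched. In the remaining integral, together with the terminal contribution $e^{-r(t\wedge\tau)}g(Z^{z,\pi}_{t\wedge\tau}, t\wedge\tau)$, I would perform the substitution $u = s-\mathcal{S}$ and factor out $e^{-r\mathcal{S}}$, rewriting the whole remainder in terms of the shifted process $\tilde Z_u := Z^{z,\pi}_{\mathcal{S}+u}$, its first exit time $\tilde\tau := \tau - \mathcal{S}$, and the residual horizon $t-\mathcal{S}$. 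Because $\pi$ is a Markov policy, the feedback coefficients driving $\tilde Z$ have exactly the same form as those driving the original process, and the strong Markov property (Theorem 4.20 of \cite{ks98}) then identifies the conditional law of $\tilde Z$ given $\mathcal{F}_\mathcal{S}$ with the law of $Z^{Z^{z,\pi}_\mathcal{S},\pi}$. Consequently the conditional expectation of the bracketed remainder is, by the very definition of $V^{g,\mathcal{E},\pi}$, equal to $V^{g,\mathcal{E},\pi}(Z^{z,\pi}_\mathcal{S},\, t-\mathcal{S})$, and reassembling with the $\mathcal{F}_\mathcal{S}$-measurable first summand yields \eqref{eq: StrongMarkov}.

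For the martingale claim, I would define the candidate process as the right-hand side of \eqref{eq: StrongMarkov} parameterised by a stopping time, namely $M_{T'} := \int_0^{T'\wedge\tau} e^{-rs}f^\pi(Z^{z,\pi}_s, t-s)\, ds + e^{-r(T'\wedge\tau)}V^{g,\mathcal{E},\pi}(Z^{z,\pi}_{T'\wedge\tau},\, t - T'\wedge\tau)$ for $T'\le t$. Applying \eqref{eq: StrongMarkov} with $\mathcal{S}:=T'\wedge\tau$ then exhibits $M_{T'}$ as the $\mathcal{F}_{T'\wedge\tau}$-conditional expectation of a single fixed integrable random variable $Y$ (the unstopped payoff), so the tower property immediately delivers $\mathbb{E}[M_{T''}\mid \mathcal{F}_{T'}] = M_{T'}$ for $T'\le T''$, and uniform integrability follows directly from $M_{T'} = \mathbb{E}[Y\mid \mathcal{F}_{T'\wedge\tau}]$ together with the boundedness of $Y$ (which holds because $f^\pi$ is bounded once we restrict to the uniformly bounded control set identified in Section~\ref{section: PIA}, and because $g$ and $V^{g,\mathcal{E},\pi}$ are continuous and hence bounded on $\overline{Q_T}$). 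The main technical obstacle is the careful bookkeeping of the time-inhomogeneous cost under the shift: the $t-s$ inside $f^\pi$ and the residual horizon $t-\mathcal{S}$ appearing in $V^{g,\mathcal{E},\pi}$ must be tracked in parallel, and the strong Markov property must be applied to a Markov (feedback) policy rather than a merely adapted control---which is precisely why the Markov-policy hypothesis is imposed on $\pi$ in the statement.
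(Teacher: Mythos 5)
Your proof follows essentially the same route as the paper: split the running-cost integral at $\mathcal{S}$, observe that the initial segment is $\mathcal{F}_\mathcal{S}$-measurable, time-shift the tail and factor out $e^{-r\mathcal{S}}$, and invoke the strong Markov property of the controlled diffusion under a Markov policy (the paper phrases this via the shift operator $\theta_\mathcal{S}$, you via the shifted process $\tilde Z_u = Z^{z,\pi}_{\mathcal{S}+u}$, but the mechanism is identical) to recognise the conditional expectation of the tail as $V^{g,\mathcal{E},\pi}(Z^{z,\pi}_\mathcal{S}, t-\mathcal{S})$. Your treatment of the martingale/UI claim, which the paper leaves implicit, is a correct and natural filling-in: applying \eqref{eq: StrongMarkov} at $\mathcal{S}=T'\wedge\tau$ exhibits $M_{T'}$ as $\mathbb{E}[Y\mid\mathcal{F}_{T'\wedge\tau}]$ with $Y$ bounded, which gives both the martingale property (tower) and uniform integrability.
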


\begin{proof}
Let $\tau= \tau_\mathcal{E}(Z^{z, \pi})$ and $\tau_\mathcal{S}:=\tau \circ \theta_\mathcal{S} = \tau_\mathcal{E}(Z^{z, \pi}_{\cdot + \mathcal{S}})$, where $\theta$ is the shift operator. Then $\tau_\mathcal{S} = \tau - \mathcal{S}$ holds almost surely, and we obtain

\begin{align*}
\begin{split}
&\mathbb{E}\bigg(\int^{t\wedge \tau}_0 e^{-rs}f^\pi (Z^{z,\pi}_s, t-s)ds +e^{-r(t\wedge\tau)}g( Z^{z,\pi}_{t\wedge \tau}, t\wedge\tau) \bigg| \mathcal{F}_\mathcal{S}\bigg)\\
&= \int^{\mathcal{S}}_0 e^{-rs}f^\pi (Z^{z,\pi}_s, t-s)ds +  \mathbb{E}\bigg(\int^{t\wedge \tau}_{\mathcal{S}} e^{-rs}f^\pi (Z^{z,\pi}_s, t-s) ds +e^{-r(t\wedge\tau)}g(Z^{z,\pi}_{t\wedge \tau}, t\wedge\tau)\bigg| \mathcal{F}_\mathcal{S}\bigg)\\
&= \int^{\mathcal{S}}_0  e^{-rs}f^\pi (Z^{z,\pi}_s, t-s)ds +  \mathbb{E}\bigg(\int^{t\wedge \tau- \mathcal{S}}_{0} e^{-r(s+\mathcal{S})}f^\pi (Z^{z,\pi}_{s+\mathcal{S}}, t-(s+\mathcal{S})) ds \\
& \qquad \qquad \qquad \qquad \qquad \qquad+ e^{-r((t-\mathcal{S})\wedge(\tau-\mathcal{S}) + \mathcal{S})}g\big(Z^{z,\pi}_{(t-\mathcal{S}) \wedge(\tau - \mathcal{S}) + \mathcal{S}}, (t-\mathcal{S}) \wedge(\tau - \mathcal{S}) + \mathcal{S}, \big) \bigg| \mathcal{F}_\mathcal{S}\bigg)\\
&= \int^{\mathcal{S}}_0 e^{-rs}f^\pi (Z^{z,\pi}_s, t-s)ds +  e^{-r\mathcal{S}}\mathbb{E}\bigg(\int^{(t-\mathcal{S})\wedge \tau_\mathcal{S}}_{0} e^{-rs}f^\pi (Z^{z,\pi}_{s+\mathcal{S}}, t-(s+\mathcal{S})) ds \\
& \qquad \qquad \qquad \qquad \qquad \qquad+ e^{-r((t-\mathcal{S})\wedge\tau_\mathcal{S})}g\big(Z^{z,\pi}_{(t-\mathcal{S}) \wedge\tau_\mathcal{S}+ \mathcal{S}}, (t-\mathcal{S}) \wedge\tau_\mathcal{S} + \mathcal{S}, \big) \bigg| \mathcal{F}_\mathcal{S}\bigg)\\
&= \int^{\mathcal{S}}_0  e^{-rs}f^\pi(Z^{z,\pi}_s, t-s) ds + e^{-r\mathcal{S}} \mathbb{E}_{x}\bigg(\bigg\{\int^{(t-\mathcal{S})\wedge \tau}_{0} e^{-rs}f^\pi (Z^{z,\pi}_{s}, t-(s+\mathcal{S})) ds \\
& \qquad \qquad \qquad \qquad \qquad \qquad+e^{-r(t-\mathcal{S})\wedge\tau} g\big(Z^{z,\pi}_{ (t-\mathcal{S}) \wedge\tau}, (t-\mathcal{S}) \wedge\tau \big)\bigg\}\circ \theta_\mathcal{S} \bigg| \mathcal{F}_\mathcal{S}\bigg)\\
&= \int^{\mathcal{S}}_0  e^{-rs}f^\pi (Z^{z,\pi}_s, t-s)ds +  e^{-r\mathcal{S}}\mathbb{E}_{Z^{z,\pi}_\mathcal{S}}\bigg(\int^{(t-\mathcal{S})\wedge \tau}_{0} e^{-rs}f^\pi (Z^{z,\pi}_{s}, t-(s+\mathcal{S})) ds \\
& \qquad \qquad \qquad \qquad \qquad \qquad+ e^{-r(t-\mathcal{S})\wedge\tau}g\big(Z^{z,\pi}_{(t-\mathcal{S}) \wedge\tau}, (t-\mathcal{S}) \wedge\tau \big)\bigg)\\
&= \int^{\mathcal{S}}_0 e^{-rs} f^\pi (Z^{z,\pi}_s, t-s)ds + e^{-r\mathcal{S}} V^{g, \mathcal{E}, \pi}(Z^{z,\pi}_\mathcal{S}, t-\mathcal{S})
\end{split}
\end{align*}

\end{proof}
By taking expectation on both sides of \eqref{eq: StrongMarkov},  we retrieve a corollary which is so-called Bellman's principle.

\begin{corollary}\label{cor: 1}
For every Markov policy $\pi$, $z\in\mathcal{E}$, $0<t<T$, and stopping time $\mathcal{S}$ which is almost surely less than or equal to $t\wedge\tau_\Omega$,

\begin{align}\label{eq: Bellman}
\begin{split}
V^{g, \mathcal{E}, \pi}(z, t) =& \mathbb{E}\bigg(\int^\mathcal{S}_0 e^{-rs}f^\pi(Z^{z,\pi}_s, t-s) ds\bigg) + e^{-r\mathcal{S}}\mathbb{E}(V^{g, \mathcal{E}, \pi}( Z^{z,\pi}_\mathcal{S}, t-\mathcal{S}))\text{.}
\end{split}
\end{align}
\end{corollary}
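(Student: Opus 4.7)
The statement is a direct corollary of Lemma~\ref{lemma: 4}, so the plan is simply to apply the tower property of conditional expectation to the identity~\eqref{eq: StrongMarkov}. The only real content is to match the resulting left-hand side with the definition of $V^{g,\mathcal{E},\pi}(z,t)$ and to split the right-hand side into the two terms appearing in~\eqref{eq: Bellman}.

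First I would recall that, by definition,
\[
V^{g,\mathcal{E},\pi}(z,t) = \mathbb{E}\!\left(\int^{t\wedge\tau}_0 e^{-rs}f^\pi(Z^{z,\pi}_s,t-s)\,ds + e^{-r(t\wedge\tau)}g(Z^{z,\pi}_{t\wedge\tau},t\wedge\tau)\right),
\]
with $\tau=\tau_\mathcal{E}(Z^{z,\pi})$ as in the proof of Lemma~\ref{lemma: 4}. Since $\mathcal{S}\le t\wedge\tau_\Omega$ almost surely, the hypotheses of Lemma~\ref{lemma: 4} are satisfied, and~\eqref{eq: StrongMarkov} gives an almost-sure identity whose left-hand side is a conditional expectation with respect to $\mathcal{F}_\mathcal{S}$.

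Next I would take unconditional expectations on both sides of~\eqref{eq: StrongMarkov}. On the left the tower property yields $V^{g,\mathcal{E},\pi}(z,t)$ directly from the definition above. On the right the first term $\int^\mathcal{S}_0 e^{-rs}f^\pi(Z^{z,\pi}_s,t-s)\,ds$ is already $\mathcal{F}_\mathcal{S}$-measurable (its expectation is the first summand in~\eqref{eq: Bellman}), and the second term $e^{-r\mathcal{S}}V^{g,\mathcal{E},\pi}(Z^{z,\pi}_\mathcal{S},t-\mathcal{S})$ transfers unchanged to produce $\mathbb{E}\!\left(e^{-r\mathcal{S}}V^{g,\mathcal{E},\pi}(Z^{z,\pi}_\mathcal{S},t-\mathcal{S})\right)$, which is the second summand in~\eqref{eq: Bellman}. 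To justify interchanging expectation with the integral and with the discount factor, I would note that $f^\pi$, $g$, and $V^{g,\mathcal{E},\pi}$ are all bounded on $\overline{Q_T}$ (the latter by the existence/regularity statement in Section~\ref{section: Mathematical_Setup} and by Proposition~\ref{prop: 1}, which yields $V^{g,\mathcal{E},\pi}\in C^{2,1}(Q_T)$ with bounded extension), so Fubini and dominated convergence apply without difficulty.

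There is no substantive obstacle: the work was done in Lemma~\ref{lemma: 4}, which in turn invoked the strong Markov property cited at the start of the Appendix. The only mild point to watch is the measurability of the integrand $f^\pi(Z^{z,\pi}_s,t-s)$ on the random interval $[0,\mathcal{S}]$, but this is routine since $\pi$ is a Markov policy and $Z^{z,\pi}$ has continuous paths. Thus the corollary follows by one application of the tower property to~\eqref{eq: StrongMarkov}.
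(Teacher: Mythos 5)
Your proof is correct and takes exactly the same route as the paper, which obtains the corollary by taking unconditional expectations of both sides of the identity~\eqref{eq: StrongMarkov} from Lemma~\ref{lemma: 4}. You also correctly keep the factor $e^{-r\mathcal{S}}$ inside the expectation, since $\mathcal{S}$ is a random stopping time; the displayed statement of the corollary contains a typo with $e^{-r\mathcal{S}}$ placed outside $\mathbb{E}$.
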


We now use the method of mirror coupling \cite{lr86}. 

\begin{lemma}\label{lemma: 1}
For every Lipschitz Markov control and small enough $\epsilon > 0$, there exists ${\delta}>0$ such that the following holds for every $z_1, z_2 \in \mathcal{E}$: if $\lVert z_1 - z_2\rVert < {\delta}$ then there exist processes $\tilde{Z}^{z_1,\pi}$ and $\tilde{Z}^{z_2,\pi}$ that have the same laws as $Z^{z_1,\pi}$ and $Z^{z_2,\pi}$ respectively such that

\begin{equation*}
\lVert \tilde{Z}^{z_1,\pi}_t - \tilde{Z}^{z_2,\pi}_t \rVert \le G_{\tau_t} \quad on \quad {t < \rho_{{\delta}} }
\end{equation*}
and
\begin{equation*}
\tilde{Z}^{z_1,\pi}_t = \tilde{Z}^{z_2,\pi}_t \quad on \quad {t \ge \rho_0}
\end{equation*}
for every $t\ge0$, where
\begin{equation*}
\rho_c : = \inf\big\{t\ge0\text{;} \lVert \tilde{Z}^{z_1,\pi} - \tilde{Z}^{z_2,\pi} \rVert = c\big\} \quad \text{,} \quad (\inf\phi = \infty)
\end{equation*}
for any $c\ge0$, $G$ is the squared Bessel process of dimension $1+\epsilon$ started at $\lVert z_1 - z_2 \rVert$, and $(\tau_t)_{t\ge0}$ is a stochastic time change with the property
\begin{equation*}
\tau_t \le \frac{t}{\nu_1} \text{,} \quad t\ge0\text{.}
\end{equation*}
\end{lemma}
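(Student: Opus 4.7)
The plan is to invoke the Lindvall--Rogers mirror coupling \cite{lr86}: construct the two processes on a common probability space by driving the second SDE with a reflection of the Brownian motion driving the first, then use It\^o's formula together with a Dambis--Dubins--Schwarz time change to compare the distance process to a Bessel-type SDE of the prescribed dimension. Concretely, I would take $\tilde Z^{z_1,\pi}$ to be the strong solution of \eqref{eq: SDE} driven by a 2-dimensional Brownian motion $W$ (local strong solutions exist since $\pi$ is Lipschitz on compacts and the Heston-type coefficients are smooth on compact subsets of $\mathcal{E}$). Writing $X_t:=\tilde Z^{z_1,\pi}_t-\tilde Z^{z_2,\pi}_t$ and $e_t:=X_t/\|X_t\|$ on $\{X_t\neq 0\}$ (with $e_t:=0$ otherwise), I set $R_t:=I-2e_te_t^T$ and drive $\tilde Z^{z_2,\pi}$ by the reflected Brownian motion $\tilde W_t:=\int_0^t R_s\,dW_s$. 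L\'evy's characterization ensures $\tilde W$ is a Brownian motion, so $\tilde Z^{z_2,\pi}$ has the required law, and pathwise uniqueness after the first meeting time forces the two processes to coincide on $\{t\ge\rho_0\}$, which is the second displayed conclusion.

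For the distance bound, I would apply It\^o's formula to $\|X_t\|^2$ on $[0,\rho_0)$. Under the mirror coupling the difference $X$ satisfies $dX_t=\bigl(\mu_\pi(\tilde Z^{z_1,\pi}_t,t)-\mu_\pi(\tilde Z^{z_2,\pi}_t,t)\bigr)\,dt+\bigl(\sigma_\pi(\tilde Z^{z_1,\pi}_t,t)-\sigma_\pi(\tilde Z^{z_2,\pi}_t,t)R_t\bigr)\,dW_t$; projection onto $e_t$ reduces the martingale part of the scalar process $\|X_t\|^2$ to a single scalar noise $\beta_t\,dB_t$ (with $B$ furnished by Dambis--Dubins--Schwarz). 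Uniform ellipticity \eqref{eq: uniformelliptic} then gives $\beta_t^2\ge 4\nu_1\|X_t\|^2$, which, under the time change $\tau_t$ defined as the generalized inverse of the quadratic-variation clock of $\|X\|^2$, produces $\tau_t\le t/\nu_1$. The drift of the time-changed scalar SDE splits into a base contribution equal to $1$ (from the radial projection of $\sigma_\pi\sigma_\pi^T$) plus a Lipschitz error of order $O(\|X_t\|)$ coming from $\mu_\pi$ and the tangential component of $\sigma_\pi$. The local Lipschitz property of the coefficients on compact subsets of $\mathcal{E}$ bounds this error uniformly by a constant multiple of $\delta$ on $\{t<\rho_\delta\}$; choosing $\delta$ small enough absorbs it into an $\epsilon$ increment of the drift. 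The Ikeda--Watanabe one-dimensional pathwise comparison theorem \cite{iw81} applied against the squared Bessel SDE of dimension $1+\epsilon$ started at $\|z_1-z_2\|$ then yields $\|X_t\|\le G_{\tau_t}$ on $\{t<\rho_\delta\}$.

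The principal technical obstacle is the careful It\^o bookkeeping for $\|X_t\|^2$ under mirror coupling: one must cleanly isolate the radial contribution of $\sigma_\pi\sigma_\pi^T$, which delivers the base dimension $1$ in the comparison Bessel SDE, from the error pieces coming from $\mu_\pi$ and from the tangential part of $\sigma_\pi$, and then show that the error vanishes uniformly with $\delta$. This is exactly the matrix computation carried out by Lindvall and Rogers in \cite{lr86} for uniformly elliptic SDEs with Lipschitz coefficients, so the adaptation reduces to verifying that $\sigma_\pi$ and $\mu_\pi$ inherit the requisite local Lipschitz regularity from the Lipschitz-on-compacts hypothesis on $\pi$ together with the fact that $\mathcal{E}$ is bounded away from the axes, where the Heston-type coefficients would lose smoothness.
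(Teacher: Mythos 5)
The paper does not prove Lemma~\ref{lemma: 1} in the text: it introduces it with ``We now use the method of mirror coupling \cite{lr86}'' and then defers the proof entirely to \cite{jms2}. Your route --- Lindvall--Rogers mirror coupling, It\^o's formula on the squared distance $\|X_t\|^2$, a Dambis--Dubins--Schwarz time change whose rate is controlled by the uniform ellipticity constant $\nu_1$, and a one-dimensional pathwise comparison with the squared Bessel SDE of dimension $1+\epsilon$ to absorb the Lipschitz error terms --- is precisely the argument the paper indicates via that citation, so the approach is the right one. One point to be careful about, though you partially cover yourself by deferring the matrix bookkeeping to \cite{lr86}: the Householder reflection $R_t = I - 2e_te_t^T$ with $e_t = X_t/\|X_t\|$ is the correct coupling matrix only when $\sigma_\pi$ is a scalar multiple of the identity. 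For a non-scalar diffusion matrix such as the one here, Lindvall--Rogers conjugate the reflection through $\sigma_\pi^{-1}$ (roughly, reflect in the direction $\sigma_\pi^{-1}e_t$ rather than $e_t$), and it is exactly this $\sigma$-adapted reflection that makes the martingale part of $\|X_t\|^2$ collapse to a single radial driver with the clean lower bound $\beta_t^2 \ge 4\nu_1\|X_t\|^2$, giving $\tau_t \le t/\nu_1$. With $R_t$ replaced by the $\sigma_\pi$-dependent Lindvall--Rogers matrix, your sketch is sound and matches the cited construction.
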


For the proof of Lemma~\ref{lemma: 1}, we refer to \cite{jms2}.

\begin{lemma}\label{lemma: 2}
For every Lipschitz Markov policy $\pi$, the function $V^{g, \mathcal{E}, \pi}(\cdot, t)$ is continuous with bounded initial condition.
\end{lemma}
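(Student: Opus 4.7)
The plan is to use the mirror-coupling construction of Lemma~\ref{lemma: 1} to compare the value function at two nearby interior points. Fix $z_1\in\mathcal{E}$ and $t\in(0,T)$; we aim to show continuity at $z_1$ by letting $z_2\to z_1$. Set $M_f:=\lVert f^\pi\rVert_\infty$ and $M_g:=\lVert g\rVert_\infty$; $M_f$ is finite since the uniform bound on $\pi$ (from the restriction of $\mathcal{A}$ discussed in Section~\ref{section: PIA}) makes $f^\pi=\pi^2/(4\kappa Q)$ bounded, and $M_g$ is finite by hypothesis. For $\lVert z_1-z_2\rVert<\delta$, construct coupled processes $\tilde Z^{z_i,\pi}$ ($i=1,2$) via Lemma~\ref{lemma: 1}, let $\tau_i:=\tau_\mathcal{E}(\tilde Z^{z_i,\pi})$, and write
\begin{equation*}
\xi_i := \int_0^{t\wedge\tau_i} e^{-rs}f^\pi(\tilde Z^{z_i,\pi}_s,t-s)\,ds + e^{-r(t\wedge\tau_i)}g(\tilde Z^{z_i,\pi}_{t\wedge\tau_i},t\wedge\tau_i),
\end{equation*}
so that $V^{g,\mathcal{E},\pi}(z_i,t)=\mathbb{E}[\xi_i]$.

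The key observation is that on the event $A:=\{\rho_0\le t\wedge\tau_1\wedge\tau_2\}$ the two trajectories coalesce before either exits $\mathcal{E}$ or before time $t$; from $\rho_0$ onwards they coincide, forcing $\tau_1=\tau_2$ and making the boundary/terminal contributions to $\xi_1$ and $\xi_2$ identical. Only the integrals over $[0,\rho_0]$ differ, so $|\xi_1-\xi_2|\le 2M_f\,\rho_0$ on $A$, while on $A^c$ the crude bound $|\xi_1-\xi_2|\le 2(tM_f+M_g)$ holds. Combining these,
\begin{equation*}
\bigl|V^{g,\mathcal{E},\pi}(z_1,t)-V^{g,\mathcal{E},\pi}(z_2,t)\bigr| \le 2M_f\,\mathbb{E}[\rho_0\wedge t] + 2(tM_f+M_g)\,\mathbb{P}(A^c).
\end{equation*}

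Both terms should vanish as $\lVert z_1-z_2\rVert\to 0$. Lemma~\ref{lemma: 1} dominates $\lVert\tilde Z^{z_1,\pi}_s-\tilde Z^{z_2,\pi}_s\rVert$ by $G_{\tau_s}$ on $\{s<\rho_\delta\}$, where $G$ is a squared Bessel process of dimension $1+\epsilon\in(1,2)$ started at $\lVert z_1-z_2\rVert$ and $\tau_s\le s/\nu_1$. Such a $G$ hits $0$ almost surely, and its first hitting time of $0$ converges to $0$ in probability as the starting point shrinks to $0$; hence $\rho_0\to 0$ in probability and $\mathbb{E}[\rho_0\wedge t]\to 0$ by bounded convergence. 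For $\mathbb{P}(A^c)$ we split $A^c\subseteq\{\rho_0>t\}\cup\{\tau_1\wedge\tau_2<\rho_0\}$; for any $\epsilon>0$,
\begin{equation*}
\mathbb{P}(\tau_1\wedge\tau_2<\rho_0)\le \mathbb{P}(\rho_0>\epsilon) + \mathbb{P}(\tau_1<\epsilon) + \mathbb{P}(\tau_2<\epsilon),
\end{equation*}
and letting $\lVert z_1-z_2\rVert\to 0$ first and then $\epsilon\to 0$, each right-hand side term vanishes: the first by the squared-Bessel estimate, and the latter two because $z_1$ is interior to $\mathcal{E}$.

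The hard part will be controlling $\mathbb{P}(\tau_1\wedge\tau_2<\rho_0)$, namely ruling out that one coupled process exits $\mathcal{E}$ before the two coalesce. This relies on $z_1$ being interior together with weak continuity of the exit-time distribution in the starting point, which in turn rests on the $C^{2+\alpha'}$ regularity of $\partial\mathcal{E}$ imposed in Section~\ref{section: Mathematical_Setup}. The remaining estimates are standard properties of squared Bessel processes of subcritical dimension once the coupling from Lemma~\ref{lemma: 1} is in place.
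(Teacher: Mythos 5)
Your proposal follows the same overall route as the paper — mirror coupling from Lemma~\ref{lemma: 1}, coalescence at $\rho_0$, and squared-Bessel estimates — and your event $A:=\{\rho_0\le t\wedge\tau_1\wedge\tau_2\}$ is actually more careful about exit times than the paper's $B_1,B_2,B_3$ split. However, there is a genuine gap in how you bound the contribution on $A$ and how you estimate $\mathbb{P}(A^c)$, and this gap is precisely the point where the two arguments diverge.

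The paper never needs the coalescence \emph{time} $\rho_0$ to be small; it only needs the coupled distance $\lVert\tilde Z^{z_1,\pi}_s-\tilde Z^{z_2,\pi}_s\rVert$ to stay small on the coalescence event. On $\{\rho_0\le\rho_\delta\}$ the distance never exceeds $\delta$, and since $f^\pi$ is \emph{Lipschitz}, the integrand $|f^\pi(\tilde Z^{z_1,\pi}_s,\cdot)-f^\pi(\tilde Z^{z_2,\pi}_s,\cdot)|$ is then $\le C\delta$ over a time horizon bounded by $t$, which makes $B_1$ small. The only probabilistic estimate the paper needs is $\mathbb{P}(\rho_\delta<\rho_0)\to 0$, and this is a statement about the \emph{order} in which $G_\tau$ hits the levels $\delta$ and $0$, which is invariant under the monotone time change $\tau$; hence the upper bound $\tau_t\le t/\nu_1$ from Lemma~\ref{lemma: 1} together with the scale function $s(z)=z^{(1-\epsilon)/2}$ suffices.

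Your crude bound $|\xi_1-\xi_2|\le 2M_f\,\rho_0$ on $A$ discards the Lipschitz information on $f^\pi$ and instead forces you to control the \emph{actual time} $\rho_0$. You then assert that $\rho_0\to 0$ in probability because the first hitting time of $0$ by a subcritical squared Bessel process tends to $0$ in probability with its starting point. But Lemma~\ref{lemma: 1} dominates the coupled distance by $G_{\tau_t}$ with a stochastic time change satisfying only $\tau_t\le t/\nu_1$ (an upper bound). Without a \emph{lower} bound on $\tau_t$, the clock running $G$ could move arbitrarily slowly, so $G_{\tau_t}$ need not reach $0$ by any prescribed real time even when $G$ itself hits $0$ quickly in its own clock; the inference from ``$\sigma_0(G)\to 0$'' to ``$\rho_0\to 0$'' is therefore unsupported. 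The same unproved claim also underlies your estimate of $\mathbb{P}(\rho_0>t)$ and your split of $\mathbb{P}(\tau_1\wedge\tau_2<\rho_0)$, so the gap propagates through $\mathbb{P}(A^c)$. To repair the argument you should either (i) use the Lipschitz bound on $f^\pi$ as the paper does, so that only $\mathbb{P}(\rho_\delta<\rho_0)$ (a time-change-invariant, scale-function quantity) is required, or (ii) strengthen Lemma~\ref{lemma: 1} with a two-sided bound on the time change $\tau_t$, which the paper does not assert.
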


\begin{proof}
Let $\epsilon > 0$ and $\hat{\delta}\le \delta$. For $\lVert z_1-z_2 \rVert \le \hat{\delta}$, we calculate $| V^{g, \mathcal{E}, \pi}(z_1, t) - V^{g, \mathcal{E}, \pi}(z_2, t) |$.

\begin{align}\label{eq: Difference}
\begin{split}
&|V^{g, \mathcal{E}, \pi}(z_1, t) - V^{g, \mathcal{E}, \pi}(z_2, t)| \\
&= \bigg| \mathbb{E}\bigg(\int^{t\wedge\tau_{z_1}}_0e^{-rs} f^\pi (\tilde{Z}^{z_1,\pi}_s, t-s)ds + e^{-r(t\wedge\tau_{z_1})}g(\tilde{Z}^{z_1,\pi}_{t\wedge\tau_{z_1}}, t\wedge\tau_{z_1})\bigg)\\
&\qquad \qquad \qquad -\mathbb{E}\bigg(\int^{t\wedge\tau_{z_2}}_0 e^{-rs}f^\pi(\tilde{Z}^{z_2,\pi}_s, t-s) ds + e^{-r(t\wedge\tau_{z_2})}g(\tilde{Z}^{z_2,\pi}_{t\wedge\tau_{z_2}}, t\wedge\tau_{z_2})\bigg)\bigg|\\
&<\bigg| \mathbb{E}\bigg(\int^{\rho_0}_0 e^{-rs}\big\{f^\pi (\tilde{Z}^{z_1,\pi}_s, t-s) -  f^\pi (\tilde{Z}^{z_2,\pi}_s, t-s)\big\}ds \bigg| I_{\rho_0 \le \rho_{\delta}}\bigg) \bigg|\\
&+\bigg| \mathbb{E}\bigg(\int^{t}_{\rho_0} e^{-rs}\big\{f^\pi (\tilde{Z}^{z_1,\pi}_s, t-s) -  f^\pi (\tilde{Z}^{z_2,\pi}_s, t-s)\big\}ds \\
& \qquad \qquad \qquad \qquad \qquad \qquad \qquad \qquad \qquad + e^{-rt}\big\{g(\tilde{Z}^{z_1,\pi}_{t}, t) -  g(\tilde{Z}^{z_2,\pi}_{t}, t) \big\}\bigg| I_{\rho_0 \le \rho_{\delta}}\bigg) \bigg|\\
&+\bigg| \mathbb{E}\bigg(\int^{t}_{0} e^{-rs}\big\{f^\pi (\tilde{Z}^{z_1,\pi}_s, t-s) -  f^\pi (\tilde{Z}^{z_2,\pi}_s, t-s)\big\}ds \\
& \qquad \qquad \qquad \qquad \qquad \qquad \qquad \qquad \qquad + e^{-rt}\big\{g(\tilde{Z}^{z_1,\pi}_{t}, t) -  g(\tilde{Z}^{z_2,\pi}_{t}, t) \big\}\bigg| I_{\rho_0 > \rho_{\delta}}\bigg) \bigg|\\
& = B_1 + B_2 + B_3\text{.}
\end{split}
\end{align}

For $B_1$, since $f^\pi$ is Lipschitz continuous, we can take $\delta_1 \in (0,\delta)$ small enough such that
\begin{equation}
B_1 < C\lVert \tilde{Z}^{z_1,\pi}_s - \tilde{Z}^{z_2,\pi}_s \rVert < \epsilon/2\text{.}
\end{equation}

For $B_2$, due to the definition of $\tilde{Z}$, the processes $\tilde{Z}^{z_1,\pi}_{t}$ and $\tilde{Z}^{z_2,\pi}_{t}$ take the same values in this time frame in consideration, so $B_2 = 0$.

Due to the boundedness of $f^\pi$ and $g$, the last term $B_3$  could be bounded by some constant multiplied by $\mathbb{P}(\rho_0 > \rho_{\delta})$. If we denote by $\rho_{\delta}(\mathcal{Y})$ and $\rho_0(\mathcal{Y})$ the first hitting times of the levels $\delta'$ and 0 respectively for any process $\mathcal{Y}$, we have from Lemma~\ref{lemma: 1}

\begin{align}
\begin{split}
\mathbb{P}(\rho_{\delta} < \rho_0) \le \mathbb{P}\bigg(\rho_{\delta}(G_\tau) < \rho_0(G_\tau)\bigg) \le  \mathbb{P}\bigg(\rho_{\delta}\bigg(G_{\frac{1}{\nu_1}}\bigg) < \rho_0\bigg(G_{\frac{1}{\nu_1}}\bigg)\bigg)\text{.}
\end{split}
\end{align}

Using the scale property of the squared Bessel process we get

\begin{align}
\begin{split}
\mathbb{P}\bigg(\rho_{\delta}\bigg(G_{\frac{1}{\nu_1}}\bigg)< \rho_0\bigg(G_{\frac{1}{\nu_1}}\bigg) \bigg)=  \mathbb{P}\bigg(\rho_{\delta}\bigg(\frac{1}{\nu_1}G\bigg) < \rho_0\bigg(\frac{1}{\nu_1}G\bigg)\bigg) =  \mathbb{P}(\rho_{\nu_1{\delta}}(G) <\rho_0 (G))\text{.}
\end{split}
\end{align}

Recall that the scale function of the Bessel process with dimension $1+\epsilon$ is given by $s(z) :=z^{\frac{1-\epsilon}{2}}$, and that the process $G$ starts at $\lVert z_1 - z_2\rVert < \hat{\delta}$. Hence we obtain

\begin{equation}
 \mathbb{P}(\rho_{\nu_1{\delta}}(G) <\rho_0 (G)) = \frac{s(\lVert z_1-z_2 \rVert) - s(0)}{s(\nu_1\delta)} \le \bigg(\frac{\hat{\delta}}{\nu_1\delta}\bigg)^{\frac{1-\epsilon}{2}}\text{.}
\end{equation}

We set $\delta = \delta_1$ and take $\hat{\delta} \in (0, \delta)$ small enough so that

\begin{equation}
2C\bigg(\frac{\hat{\delta}}{\nu_1\delta}\bigg)^{\frac{1-\epsilon}{2}} < \frac{\epsilon}{2}\text{.}
\end{equation}

Collecting what we calculated, we have proved that $\lVert z_1-z_2\rVert <\hat{\delta}$ implies $|V^{g, \mathcal{E}, \pi}(z_1, t) - V^{g, \mathcal{E}, \pi}(z_2, t)| < \epsilon$, so we have uniform continuity of $V^{g, \mathcal{E}, \pi}(\cdot, t)$.
\end{proof}

\begin{lemma}\label{lemma: 3}
For every Lipschitz Markov policy $\pi$, the function $V^{g, \mathcal{E}, \pi}$ is continuous.
\end{lemma}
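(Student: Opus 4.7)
The plan is to deduce joint continuity on $Q_T$ from the spatial continuity already proved in Lemma~\ref{lemma: 2}, via the triangle split
\begin{equation*}
|V^{g,\mathcal{E},\pi}(z_1,t_1)-V^{g,\mathcal{E},\pi}(z_2,t_2)|\le |V^{g,\mathcal{E},\pi}(z_1,t_1)-V^{g,\mathcal{E},\pi}(z_2,t_1)|+|V^{g,\mathcal{E},\pi}(z_2,t_1)-V^{g,\mathcal{E},\pi}(z_2,t_2)|.
\end{equation*}
For the first term I would inspect the proof of Lemma~\ref{lemma: 2}: the bounds obtained there on $B_1$ (Lipschitz continuity of $f^\pi$), $B_2$ (the automatic cancellation after coupling), and $B_3$ (the Bessel scale-function estimate) involve only the Lipschitz data, $T$, and $\nu_1$ in \eqref{eq: uniformelliptic}, so the modulus of continuity in $z$ is in fact uniform in $t\in[0,T]$. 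This makes the first term on the right vanish as $\|z_1-z_2\|\to 0$, uniformly in $t_1$.

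The second term is the temporal increment at a fixed interior point $z_2\in\mathcal{E}$. Without loss of generality I would assume $0<t_1<t_2$ with $h:=t_2-t_1$ small and apply Bellman's principle (Corollary~\ref{cor: 1}) with stopping time $\mathcal{S}:=h\wedge\tau_\mathcal{E}(Z^{z_2,\pi})$, yielding
\begin{equation*}
V^{g,\mathcal{E},\pi}(z_2,t_2)=\mathbb{E}\int_0^{\mathcal{S}}e^{-rs}f^\pi(Z^{z_2,\pi}_s,t_2-s)\,ds+\mathbb{E}\big[e^{-r\mathcal{S}}V^{g,\mathcal{E},\pi}(Z^{z_2,\pi}_\mathcal{S},t_2-\mathcal{S})\big].
\end{equation*}
The running-cost term is $O(h)$ because $f^\pi$ is bounded on the bounded set $\mathcal{E}\times[0,T]$ (any function Lipschitz on a bounded set is bounded). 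On the event $\{\tau_\mathcal{E}>h\}$ we have $\mathcal{S}=h$ and $t_2-\mathcal{S}=t_1$, so the integrand in the second expectation reads $e^{-rh}V^{g,\mathcal{E},\pi}(Z^{z_2,\pi}_h,t_1)$; by path continuity of $Z^{z_2,\pi}$ together with the spatial continuity of $V^{g,\mathcal{E},\pi}(\cdot,t_1)$ from Lemma~\ref{lemma: 2}, this converges $\mathbb{P}$-a.s.\ to $V^{g,\mathcal{E},\pi}(z_2,t_1)$ as $h\downarrow 0$. On the complementary event the integrand remains bounded (since $V^{g,\mathcal{E},\pi}$ is bounded by $T\|f^\pi\|_\infty+\|g\|_\infty$) while $\mathbb{P}(\tau_\mathcal{E}\le h)\downarrow\mathbb{P}(\tau_\mathcal{E}=0)=0$ because $z_2$ lies at positive distance from $\partial\mathcal{E}$ and the diffusion has continuous paths. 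Bounded convergence then yields $V^{g,\mathcal{E},\pi}(z_2,t_2)\to V^{g,\mathcal{E},\pi}(z_2,t_1)$.

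I expect the main work to be extracting the uniformity in $t$ of the spatial-continuity estimate of Lemma~\ref{lemma: 2} cleanly, because only with that uniformity do the two pieces of the triangle split combine to give genuine joint continuity rather than separate continuity in each variable. A secondary point to verify is the applicability of Corollary~\ref{cor: 1} to $\mathcal{S}=h\wedge\tau_\mathcal{E}$, which is granted since $\mathcal{S}\le t_2\wedge\tau_\mathcal{E}$ almost surely. This is enough for the usage in Proposition~\ref{prop: 1}, where the ball $U$ with $\overline U\subset Q_T$ only requires interior continuity of $V^{g,\mathcal{E},\pi}$; continuity up to the parabolic boundary $\Gamma_T$, if required, would follow by a similar argument combined with the assumed continuity of $g$ on $\Gamma_T$ applied to paths that exit $\mathcal{E}$ rapidly near $\partial\mathcal{E}$.
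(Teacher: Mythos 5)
Your overall strategy coincides with the paper's: split via the triangle inequality into a spatial increment, handled by Lemma~\ref{lemma: 2}, and a temporal increment, handled by Bellman's principle (Corollary~\ref{cor: 1}). The difference is in how the temporal piece is closed. The paper picks $\mathcal{S}=\delta$ in Corollary~\ref{cor: 1}, rewrites the increment as a running cost plus $\mathbb{E}\bigl(V^{g,\mathcal{E},\pi}(Z^{z,\pi}_\delta,t)-V^{g,\mathcal{E},\pi}(z,t)\bigr)$, and then estimates $\mathbb{E}\lVert Z^{z,\pi}_\delta-z\rVert$ quantitatively via the SDE, Jensen, and Burkholder--Davis--Gundy, appealing to Lemma~\ref{lemma: 2} to convert displacement into a value increment. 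You instead take $\mathcal{S}=h\wedge\tau_\mathcal{E}$, split on $\{\tau_\mathcal{E}>h\}$ versus its complement, and invoke bounded convergence together with path continuity and $\mathbb{P}(\tau_\mathcal{E}\le h)\to 0$. Your route is more careful on two points that the paper glosses over: it handles explicitly the possibility that the process exits $\mathcal{E}$ before time $h$ (so that Corollary~\ref{cor: 1} with a constant stopping time is not directly applicable), and it does not require $V^{g,\mathcal{E},\pi}(\cdot,t)$ to satisfy a Lipschitz-type bound, which Lemma~\ref{lemma: 2} does not literally provide (only uniform continuity). The trade-off is that the paper's moment estimate yields an explicit $O(\sqrt{\delta})$ rate for the temporal modulus, whereas yours is purely qualitative. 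Your observation that Lemma~\ref{lemma: 2}'s modulus is uniform in $t$, required to make the triangle split deliver joint continuity, is an implicit but genuine hypothesis in the paper's argument as well; flagging it is correct. Both arguments establish interior continuity, which is all Proposition~\ref{prop: 1} needs, and your closing remark about boundary continuity via $g$ is consistent with that.
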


\begin{proof}
If we proved the continuity of $V^{g, \mathcal{E}, \pi}$ with respect to $t$ for fixed $z$, the statement is proved using the triangle inequality and Lemma~\ref{lemma: 2}. Therefore, we prove the continuity in $t$ with fixed $z$. Due to Corollary~\ref{cor: 1}, we have

\begin{align}
\begin{split}
V^{g, \mathcal{E}, \pi}&(z, t+\delta) - V^{g, \mathcal{E}, \pi}(z, t) =\mathbb{E}\bigg(\int^{\delta}_0 e^{-rs}f^\pi(Z^{z, \pi}_s, t-s)ds\bigg)\\
& \qquad \qquad \qquad \qquad \qquad \qquad +e^{-r\delta} \mathbb{E}\bigg(V^{g, \mathcal{E}, \pi}(Z^{z, \pi}_\delta, t) - e^{r\delta}V^{g, \mathcal{E}, \pi}(z, t)\bigg)\text{.}
\end{split}
\end{align}

Applying Lemma~\ref{lemma: 2}, we obtain

\begin{equation}
|V^{g, \mathcal{E}, \pi}(z, t+\delta) - V^{g, \mathcal{E}, \pi}(z, t) | \le C\delta + C'\mathbb{E}(\lVert Z^{z, \pi}_\delta - z\rVert)\text{.}
\end{equation}

The SDE for $Z^{z, \pi}_\delta$ yields

\begin{align}
\begin{split}
Z^{z, \pi}_\delta - z = &\int^\delta_0 \mu_\pi(Z^{z, \pi}_s, s)ds + \int^\delta_0 \sigma_\pi(Z^{z, \pi}_s, s)dW_s\text{.}
\end{split}
\end{align}

Therefore, we have 

\begin{align}
\begin{split}
&|V^{g, \mathcal{E}, \pi}(z, t+\delta) - V^{g, \mathcal{E}, \pi}(z, t) | \le C\delta \\
& \qquad \qquad \qquad \qquad + C'\mathbb{E}\bigg(\bigg|\int^\delta_0 \mu_\pi(Z^{z, \pi}_s, s)ds\bigg|\bigg) + C''\mathbb{E}\bigg( \bigg|\int^\delta_0 \sigma_\pi(Z^{z, \pi}_s, s)dW_s\bigg|\bigg)\text{.}
\end{split}
\end{align}

The second term on RHS can be bounded by some multiple of $\delta$ as $\mu_\pi$ is bounded. For the last term, using Jensen's inequality and Burkholder-Davis-Gundy inequality,

\begin{align}
\begin{split}
\mathbb{E}\bigg( \bigg|\int^\delta_0 \sigma_\pi(Z^{z, \pi}_s, s)dW_s\bigg|\bigg)  &\le \bigg(\mathbb{E}\bigg(\int^\delta_0 \sigma_\pi(Z^{z, \pi}_s, s)dW_s\bigg)^2\bigg)^{\frac{1}{2}} \\
&\lesssim  \bigg(\mathbb{E}\bigg(\int^\delta_0 \sigma^2_\pi(Z^{z, \pi}_\delta, s)ds\bigg)\bigg)^{\frac{1}{2}}\text{.}
\end{split}
\end{align}

This proves the continuity of $V^{g, \mathcal{E}, \pi}$ with respect to $t$ with fixed $z$. Therefore, the continuity of $V^{g, \mathcal{E}, \pi}$ is proved.
\end{proof}


\begin{thebibliography}{99}

\bibitem{bis15}
BIS (Bank for International Settlements) 2015. Semiannual OTC Derivatives Statistics, 13 September 2015. http://www.bis.org/statistics/derstats.htm

\bibitem{duf06}
	Duffy, D.J. 2006. \emph{Finite Difference Methods in Financial Engineering: A Partial Differential Equation Approach}. {John Wiley \& Sons, Inc.}.

\bibitem{eva10}
	Evans, L.C. 2010. \emph{Partial Differential Equations}, {American Mathematical Society}, second edition.

\bibitem{fs93}
	Fleming, W.H. and Soner, H.M. 1993. \emph{Controlled Markov Processes and Viscosity Solutions}, {Springer-Verlag}.

\bibitem{fs97}
	Frey, R. and Stremme, A. 1997.  \emph{Market Volatility and Feedback Effects from Dynamic Hedging}, {Mathematical Finance}, vol. 7(4), pp. 351-374.

\bibitem{fri64}
	Friedman, A. 1964. \emph{Partial Differential Equations of Parabolic Type}, {Prentice-Hall}.

\bibitem{hes93}
	Heston, S.L. 1993.  \emph{A Closed-Form Solution for Options with Stochastic Volatility with Applications to Bond and Currency Options}, {Review of Financial Studies}, vol. 6(2), pp. 327-343.

\bibitem{hul12}
	Hull, J.C. 2012. \emph{Options, Futures, and Other Derivatives}. {Pearson}, eighth edition. 

\bibitem{iw81}
	Ikeda, N. and Watanabe, S. 1981. \emph{Stochastic Differential Equations and Diffusion Processes}, {North Holland}, {North-Holland Mathematical Library}.

\bibitem{jms1}
	Jacka, S.D., Mijatovi{\'c}, A., and {\v S}iraj, D. \emph{Policy Improvement Algorithm for Continuous Finite Horizon Problem}, forthcoming.

\bibitem{jms2}
	Jacka, S.D., Mijatovi{\'c}, A., and {\v S}iraj, D. \emph{Policy Improvement Algorithm for Controlled Multidimensional Diffusion Processes}, forthcoming.

\bibitem{ks98}
	Karatzas, I. and Shreve, S.E. 1998. \emph{Brownian Motion and Stochastic Calculus}, {Springer Science+Business Media, Inc.}, second edition.

\bibitem{kt81}
	Karlin, S. and Taylor, H.M. 1981. \emph{A Second Course in Stochastic Processes}, {Academic Press Inc.}

\bibitem{lsu68}
	Lady$\check{z}$enskaja, O.A., Solonnikov, V.A., and Ural'ceva, N.N. 1968. \emph{Linear and Quasi-linear Equations of Parabolic Type}, {American Mathematical Society}.

\bibitem{lie96}
	Lieberman, G.M. 1996. \emph{Second Order Parabolic Differential Equations}. {World Scientific Publishing Co. Pte. Ltd.}

\bibitem{lin08}
	Lin, S. 2008. \emph{Finite Difference Schemes for Heston Model}. {Master's thesis}.{University of Oxford}.

\bibitem{lr86}
	Lindvall, T. and Rogers, L.C.G. 1986. \emph{Coupling of Multidimensional Diffusions by Reflection}, {The Annals of Probability}, vol. 14(3), pp. 860-872.

\bibitem{lkd10}
	Lord, R., Koekkoek, R., and Dijk, D. 2010. \emph{A Comparison of Biased Simulation Schemes for Stochastic Volatility Models}, {Quantitative Finance}, vol. 10(2), pp. 177-194.

\bibitem{ps98}
	Platen, E. and Schweizer, M. 1998. \emph{On Feedback Effects from Hedging Derivatives}, {Mathematical Finance}, vol. 8(1), pp. 67-84.

\bibitem{roc70}
	Rockafellar, R.T. 1970. \emph{Convex Analysis}, {Princeton University Press}.

\bibitem{sp98}
	Sircar, K.R. and Papanicolaou, G. 1998. \emph{General Black-Scholes Models Accounting for Increased Market Volatility from Hedging Strategies}, {Applied Mathematical Finance}, vol. 5(1), pp. 45-82.

\bibitem{smi85}
	Smith, G.D. 1985. \emph{Numerical Solution of Partial Differential Equations: Finite Difference Methods}. {Oxford Applied Mathematics and Computing Science Series}. {Oxford}, third edition. 

\bibitem{son07}
	Soner, H. 2007. \emph{Stochastic Representations for Nonlinear Parabolic PDEs} in \emph{Handbook of Differential Equations}, {Evolutionary Equations}, vol. 3, Chapter 6, Elsevier B.V.

\bibitem{tr00}
	Tavella, D., and Randall, C. 2000. \emph{Pricing Financial Instruments: The Finite Difference Method}.
{John Wiley \& Sons, Inc.}
\end{thebibliography}
\end{document}